\pgfplotsset{compat=1.18} 
\title{An Effective  Model for Droplet Impact Dynamics on Micro-Structured Surfaces: Nonlocal Theory and SPH Simulation of Pancake Bouncing}
\author{
ZHONGHUA QIAO \thanks{Department of Applied Mathematics, The Hong Kong Polytechnic University,  Hung Hom, Kowloon, Hong Kong
  (\email{zhonghua.qiao@polyu.edu.hk}).}
\and ZUANKAI WANG \thanks{Department of Mechanical Engineering, The Hong Kong Polytechnic University, Hung Hom, Kowloon, Hong Kong
  (\email{zk.wang@polyu.edu.hk}).}
\and YIFAN WEI \thanks{Corresponding author. Department of Applied Mathematics, The Hong Kong Polytechnic University,  Hung Hom, Kowloon, Hong Kong
  (\email{yi-fan.wei@polyu.edu.hk; weiyifan806@gmail.com}).}}  
\begin{document} 

\maketitle 

\begin{abstract}
The accurate mathematical modeling of droplet impact dynamics on micro-structured surfaces is fundamental to understanding and predicting complex fluid behaviors relevant to a wide range of engineering and scientific applications. In particular, the pancake bouncing phenomenon—systematically studied by Liu et al. (\textit{Nature Physics}, 2014)—on superhydrophobic micro-structured substrates presents significant theoretical challenges. Central to these challenges is the need to construct {effective} mathematical models that capture the intricate influence of substrate micro/nanostructures on droplet dynamics. This requires the development of robust formulations for surface tension, contact line dynamics, and the interaction forces between fluid and solid structures. In this work, we formulate a nonlocal mathematical framework for the simulation of 3D pancake bouncing on superhydrophobic micro-cone arrays. The model incorporates intermolecular attractive forces to represent droplet surface tension, and we provide a strict theoretical derivation linking these forces quantitatively to the macroscopic surface tension coefficient, thereby circumventing the reliance on empirical parameter tuning. The complex geometry of micro-cone arrays introduces fundamental difficulties in defining local normal directions for contact algorithms. To overcome this, we develop a nonlocal contact repulsion force model that governs fluid-solid interactions and ensures numerical stability under high Weber number conditions.
Based on this mathematical foundation, we implement the model using smoothed particle hydrodynamics (SPH), enabling high-precision 3D simulations. 
Computational experiments, validated against empirical data, confirm the model’s accuracy and robustness, while underscoring the key role of numerical simulation in elucidating droplet-microstructure interactions.
\end{abstract} 
\begin{keywords}
Pancake bouncing; SPH method; attractive forces; nonlocal contact repulsion
\end{keywords}

{MSC Codes}  37M05, 65M75, 70F10, 76M28.

\section{Introduction}
Understanding and controlling droplet rebound dynamics is essential for advancing liquid manipulation technologies in various engineering applications.
This phenomenon is particularly critical in applications such as combustion systems \cite{sirignano2014advances}, inkjet printing \cite{wijshoff2018drop}, anti-icing surfaces \cite{bird2013reducing}, corrosion-resistant coatings \cite{zhang2016superhydrophobic} and self-cleaning materials \cite{blossey2003self}.
Recent studies reveal that these technologies have broad applications across fields including aerospace, biomedicine, automotive engineering, and smart textiles \cite{khan2022recent}.
While hydrophobic coatings facilitate droplet removal, achieving directional control and efficient shedding requires not only low surface energy but also carefully designed nanostructures with tailored surface chemistry \cite{nakajima2011design}.
At such nano-enhanced interfaces, pancake bouncing \cite{liu2014pancake} emerges as a remarkable phenomenon: droplets impacting nanostructured superhydrophobic surfaces lift off immediately in a flattened state without retracting, reducing contact time by approximately 75\% compared to conventional rebound.
Precise liquid control is crucial for advanced functional surfaces and devices, necessitating deeper research into droplet rebound dynamics, especially given their significant implications in healthcare and electronics.

The dynamics of droplet rebound has garnered considerable scientific attention, driving a steady stream of key discoveries in interfacial science\cite{mohammad2023physics}. 
Early studies by \cite{schutzius2015spontaneous} revealed that droplets on superhydrophobic surfaces in low-pressure environments can self-remove through spontaneous levitation and trampoline-like bouncing.
 Later, \cite{hao2015superhydrophobic}  identified a superhydrophobic-like bouncing regime on thin liquid films, where the rebound dynamics were independent of the underlying substrate.
Subsequent work by \cite{liu2016superhydrophobic} introduced a facile method to fabricate hierarchical multilayered superhydrophobic surfaces for efficient droplet shedding.
\cite{yun2017bouncing} further explored symmetry-breaking impacts, showing that ellipsoidal droplets reduce contact time and suppress rebound magnitude.
Recent advances have significantly enhanced control over droplet rebound behavior through novel interfacial strategies.
\cite{zhao2023golden} proposed a 'golden section' design criterion to regulate rebound via structural spacing, challenging conventional approaches.
\cite{fang2023target} developed a leaf-inspired superhydrophobic cantilever for directional bouncing,
while \cite{lathia2023two} reduced contact time by coating droplets with hydrophobic particles.
Additionally, through the utilization of innovative tools, a deeper understanding of the underlying mechanisms has been achieved.
In their study, \cite{kumar2024heat}  utilized high-speed infrared thermography and a three-dimensional transient heat conduction COMSOL model to accurately map the dynamic heat flux distribution during droplet impact on a cold superhydrophobic surface.
\cite{goswami2023simultaneous} introduced a novel micro-controlled droplet generator capable of releasing two equally sized water droplets simultaneously on-demand, enabling the investigation of the behavior of two droplets on a dry substrate.
\cite{yue2024controllable} reported that controlled self-transport of bouncing droplets can be achieved with the assistance of ultra-smooth surfaces featuring wedge-shaped grooves.
\cite{qian2024emergence}  uncovered molecular-scale wetting dynamics through in situ characterization, providing fresh insights into droplet-surface interactions.
And \cite{sobac2025small} conducted a comprehensive numerically assisted analysis based on verifiable assumptions such as quasi-stationarities and small Reynolds/Peclet numbers.

Although experimental investigations provide critical observations, the numerical modeling is essential for achieving  fundamental physical understanding and enabling technological applications.
In fact, despite extensive experimental investigations into droplet impact phenomena, the fundamental physics governing collisions with hydrophobic surfaces remains inadequately understood, as noted in literature \cite{khojasteh2016droplet}.
On the contrary, numerical modeling offers unique insights into interfacial phenomena.
For example, \cite{huang2023phase} conducted numerical simulations based on the Cahn-Hilliard-Navier-Stokes model to achieve a complete and quantitative study on the effects of interfacial tension when it becomes sufficiently low.
Additionally, \cite{zhang2018numerical} simulated buoyancy-driven bubble-wall interactions using an adaptive ALE method, capturing the interaction of a rising bubble with a solid wall.
Furthermore, \cite{ray2024new} investigated the head-on collision of unequal-size droplets of the same liquid on wetting surfaces using direct numerical simulations at different size ratios.
Based on the Onsager variational principle, \cite{qian2024emergence} presented a phenomenological model that reveals how the intrinsic material parameters of soft gels dictate phase-separation dynamics.
Moreover, fabricating nanostructured substrates for droplet impact studies incurs significant costs, whereas numerical simulations provide a cost-efficient alternative for parametric studies and experimental design optimization.
Within this framework, numerical simulations have become pivotal in impact dynamics research. The creation of high-fidelity computational models capable of capturing intricate interfacial processes is essential for both theoretical advancements and practical applications.

Smoothed Particle Hydrodynamics (SPH) method, as a mesh-free Lagrangian method,  offers distinct advantages in simulating droplet rebound phenomena. In fact, droplet rebound typically involves large interfacial deformations and topological changes, presenting inherent difficulties for grid-based simulations.
Traditional Eulerian approaches, such as the Lattice Boltzmann method (LBM), rely on multiphase domain discretization and phase-field method to accurately capture fluid interfaces.
While LBM excels in modeling complex multi-phase flows, its application to gas-liquid systems with high density ratios demands non-trivial modifications to ensure numerical stability \cite{QIAO20191216,qiao2024free, AAMM-17-4}.
By contrast, SPH circumvents these limitations through its fully Lagrangian mesh-free formulation, eliminating the need for explicit gaseous-phase discretization altogether \cite{king2023large}.
This intrinsic capability renders SPH particularly suited for investigating droplet interactions with hydrophobic surfaces. Other alternative particle-based approaches such as molecular dynamics (MD) and dissipative particle dynamics (DPD) suffer from severe scalability limitations in mesoscale droplet simulations due to their demanding resolution requirements and consequent computational overhead.
The SPH method achieves an optimal balance between numerical resolution and computational efficiency. Notably, while SPH has seen extensive practical applications in recent years, its theoretical foundations have been progressively strengthened. Significant contributions include Du et al.'s mathematical analysis from a nonlocal perspective \cite{du2020mathematics,lee2019asymptotically} and Sun et al.'s stability analysis through an energy-based framework \cite{feng2023energy,AAMM-14-5,zhu2024energy}. These theoretical advancements have collectively reinforced SPH's position as a robust and well-validated methodology for droplet dynamics research.

Surface tension modeling constitutes a critical component of SPH simulation for droplet dynamics, and numerous significant studies have been conducted in this area.
Distinct from phase-field methods that regulate interface motion through energy minimization, SPH implementations necessitate the explicit definition of forces acting on interface particles to control interfacial behavior. In reality, researchers have developed two principal methodological approaches.
The first approach is the Continuous Surface Force (CSF) formulation \cite{brackbill1992continuum}, which implements a volumetric force at fluid interfaces through local curvature estimation.
While this method maintains physical consistency, it demands careful handling of surface normal vectors and curvature calculations near interface regions \cite{huo2024modeling,dong2023simulation,blank2024surface}. Furthermore, this approach faces challenges in simulating droplet topological changes.
The second approach involves the particle-particle interaction force (PIF) model \cite{tartakovsky2005modeling,breinlinger2013surface}, which bypasses the need for normal vector and curvature calculations.
This method demonstrates versatility in simulating solid-liquid interactions and solid wettability.
Despite its practical effectiveness, the second approach dependence on empirically determined strength of the interparticle attraction \cite{dong2024droplet}, lacking strong theoretical foundations, constitutes a major limitation. Notice that surface tension emerges from interparticle attractive forces. Thus, when these attraction strengths are empirically predetermined, the resulting surface tension coefficients become inherently uncontrollable, frequently leading to physical inconsistencies.  Therefore, developing a method that preserves theoretical consistency while ensuring conciseness and broad applicability has become critically important.

The contact algorithms is an another challenge for stable and precise SPH simulation, particularly for simulating droplet collisions on intricate arrays of microcones.
The inherent complexity of droplet impact dynamics, characterized by multiscale solid-liquid interactions, creates escalating challenges for contact algorithm stability as the Weber number increases \cite{pan2009binary}.
The traditional repulsive force algorithms \cite{huo2024modeling,dong2023simulation,dong2019modeling}, as a popular contact algorithm,  require accurate surface normals $\boldsymbol{n}$ and vertical distance $d_p$ to surface.
Specifically, the repulsive force $\boldsymbol{F}_i^{\text{rep}}$ on particle
$i$  satisfies the following relation:
\begin{align}\label{repulsive}
\boldsymbol{F}_i^{\text{rep}} \propto \frac{2|d_0-d_p|}{(\Delta t)^2} m_i\boldsymbol{n}_i,
\end{align}
where $\Delta t$ is time step, $m_i$ is particle mass and $d_0$ is contact threshold distance.
Although this approach demonstrates satisfactory performance for ideally smooth substrates \cite{mohammad2023physics}, its extension to realistically rough substrates encounters fundamental limitations. A representative case involves micro-structured hydrophobic surfaces \cite{liu2014pancake}, as illustrated in Fig. \ref{droplet_impact}.
\begin{figure}[!t]
\centering
\includegraphics[width=0.4\linewidth]{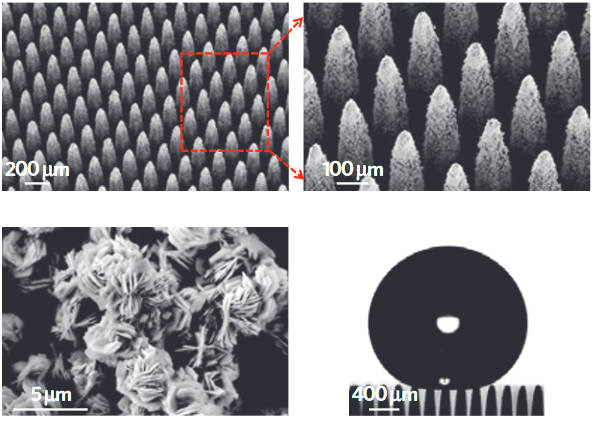}
\includegraphics[width=0.29\linewidth]{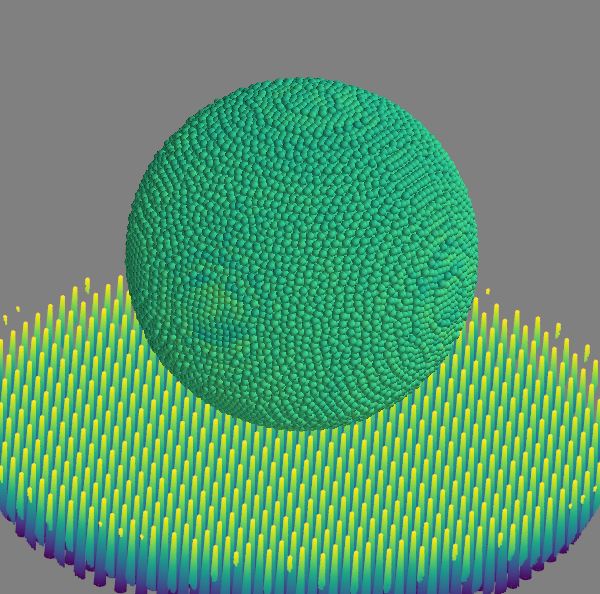}
\caption{Droplet impact on microcone arrays: experimental observation (left; Copyright 2014 Springer Nature) and SPH simulation (right)}
\label{droplet_impact}
\end{figure}
In fact, due to  the non-smoothness of the substrates, it is hard to determinate  the local surface normals $\boldsymbol{n}_i$ and vertical distance $d_p$.
Besides, the virtual particle method is a widely used boundary treatment technique to address kernel truncation and wall penetration issues.
However, when used  in isolation,  the virtual particle method  often fails to reproduce physically accurate repulsive forces during droplet rebound.
Consequently, the development of a robust contact algorithm capable of handling intricate microcone geometries is critical for reliable simulation of pancake bouncing dynamics.

The aim of this paper is to develop a robust and universally applicable numerical methods to capture pancake bouncing on superhydrophobic microcone arrays.
In our work, the  cohesive forces is  introduce to simulate intermolecular attraction.  Through theoretical derivation, we establish a relationship between cohesive forces and interfacial tension coefficients to ensure the physical consistency of the model. To address the contact issues of complex microcone structures, we define nonlocal contact repulsion in the boundary region to achieve stable simulation of pancake bouncing behavior in three-dimensional space.
Our main contributions can be summarized as follows:
\begin{enumerate}
    \item [(i)]\ {\bf A nonlocal mathematical framework} for droplet dynamics on micro-structured surfaces, rigorously linking intermolecular forces to the macroscopic surface tension coefficient without empirical parameter tuning.
    \item [(ii)]\ {\bf A stabilized nonlocal contact algorithm} for rough substrates, addressing the geometric challenges of micro-cone arrays by introducing a robust fluid-solid repulsion force model that ensures numerical stability at high Weber numbers.
    \item [(iii)]\ {\bf High-fidelity SPH-based simulations} of pancake bouncing on superhydrophobic micro-cone arrays, validated against experimental data, which elucidate the role of microstructure interactions in droplet dynamics.
\end{enumerate}

The rest of this paper is structured as follows. Section 2 presents the theoretical formulation of the governing equations for droplet impact dynamics.  Section 3 develops a robust SPH discretization method. Section 4 provides comprehensive validation of the numerical model through systematic comparisons with experimental data. Finally, Section 5 summarizes the principal conclusions and their significance.

\section{Governing equations}
The bouncing dynamics of the droplet are governed by conservation equations of mass and momentum. The the Lagrangian-form of these equations can be expressed as:
\begin{align}
\frac{\mathrm{d} \rho}{\mathrm{d} t}&=-\rho\nabla\cdot\boldsymbol{u},\\
\rho\frac{\mathrm{d} \boldsymbol{u}}{\mathrm{d} t}&=-\nabla p   +  \mu \nabla^2 \boldsymbol{u}+  \rho\boldsymbol{g} + \boldsymbol{f}_{\mathrm{att}} + \boldsymbol{F},
\end{align}
The governing equations involve the fluid velocity field $\boldsymbol{u}$, hydrodynamic pressure $p$, and local densities $\rho$. The pressure  $p$  is calculated by the following state of equation
\begin{equation}p=\frac{c^2\rho_0}{\gamma}\left(\left(\frac{\rho}{\rho_0}\right)^{\gamma}-1\right). \end{equation}
We adopt $\gamma=7$ \cite{monaghan1994simulating} for the equation of state, with $\rho_0$ and $c$ denoting the reference densities and the numerical speed of sound, respectively.
$\frac{\mathrm{d}  }{\mathrm{d} t}$ represents the material derivative where $\frac{\mathrm{d}  }{\mathrm{d} t} \equiv \frac{\partial  }{\partial t} + \boldsymbol{u}\cdot\nabla$.
$\mu$ denotes  dynamic viscosity.
The body forces consist of gravitation  $\rho\boldsymbol{g}$, droplet cohesive attractive force $\boldsymbol{f}_{\mathrm{att}}$, and contact repulsion $\boldsymbol{F}$ at the boundary interface.

Accurate characterization of cohesive attractive and repulsive interactions is crucial for reliably simulating droplet rebound dynamics on microcone arrays surfaces.
In this section, we present a unified mathematical framework that integrates smoothed kernel functions with nonlocal operators to describe both $\boldsymbol{f}_{\mathrm{att}}$ and $\boldsymbol{F}$, enabling robust three-dimensional modeling of bouncing behavior.

\subsection{Modeling of droplet cohesion}

It is well-established that the surface tension of water droplets originates from intermolecular forces (hydrogen bonding and van der Waals interactions) between water molecules.
While intermolecular interactions are physically straightforward, they impose prohibitive computational costs in macroscopic system simulations.
For surface tension modeling at continuum scales, cohesive force models offer a viable alternative.
For the simplicity of the mathematical representation, we need to introduce the smoothing kernel function $W(\boldsymbol{x}-\boldsymbol{x}',h)$,  which satisfies the following conditions:
\begin{itemize}
    \item Normalization condition:$\int_\Omega W(\boldsymbol{x}-\boldsymbol{x}',h)\textrm{d}\boldsymbol{x}'=1$;
    \item Symmetric property: $W(\boldsymbol{x}-\boldsymbol{x}',h)=W(\boldsymbol{x}'-\boldsymbol{x},h)$;
    \item Compact condition: $W(\boldsymbol{x}-\boldsymbol{x}',h)=0 \text{ where } |\boldsymbol{x}-\boldsymbol{x}'|\geq h$.
\end{itemize}
Here, $|\cdot|$ represents the Euclidean norm and $h$ is the influence radius of kernel.  $W(\boldsymbol{x},h)$ affects only a specific support domain. 
 Additionally, the following lemma on quadruple integrals is essential for our subsequent analysis.
\begin{lemma}\label{quadruple_integral}
Assume $R_0>0$. Denote 
$$G[f]:=\int_0^{R_0}  \int_{0}^{\pi} \int_{0}^{arc\cos(s/{R_0})} \int_{s/\cos(\theta)}^{R_0} f(r)\cos(\theta) r^2\sin(\theta) drd\theta d\phi d s.$$
There holds
  \begin{align}
  G[f]=\frac{ \pi}{3}\int_0^{R_0} f(r)r^3 dr.
\end{align} 
Furthermore, if $f(\|\mathbf{x}\|)=\partial_{\|\mathbf{x}\|}W(\mathbf{x},R_0)$ and $W$ is any kernel function,
there holds
$$G[f]=\frac{1}{4}.$$
\end{lemma}
The proof of Lemma \ref{quadruple_integral} can be established using techniques of multiple integration. For completeness, we provide the detailed proof in the Appendix \ref{appendixA}.

We now proceed to construct a cohesive forces model capable of generating physically accurate interfacial tension.
In our work, each Lagrangian particle represents a finite fluid volume element comprising numerous water molecules.
Between adjacent particles, we introduce an artificial cohesive force term $\boldsymbol{f}(\boldsymbol{x},\boldsymbol{y})$ to approximate the net effect of microscopic intermolecular forces. Specifically, we denote the attraction of the particle at $\boldsymbol{y}$ to the particle at $\boldsymbol{x}$ as follows
\begin{align}\label{f_c}
  \boldsymbol{f}(\boldsymbol{x},\boldsymbol{y})
  =A_\sigma \rho(\boldsymbol{y})\nabla_{\boldsymbol{x}}W(\boldsymbol{x}-\boldsymbol{y},{R_{\mathrm{att}}}),
\end{align}
where $A_\sigma$ is a parameter dependent on interfacial tension $\sigma$ and $R_{\mathrm{att}}$ is the cohesive force interaction radius. The specific form of $A_\sigma$ will be given later.
$\nabla_{\boldsymbol{x}}W(\boldsymbol{x}-\boldsymbol{y},{R_{\mathrm{att}}})$ in the same direction as $\boldsymbol{y}-\boldsymbol{x}$.
Thus, by integrating over $\boldsymbol{y}$, the total attractive force acting on the particle located at position $\boldsymbol{x}$ is
\begin{align}\label{def_fatt}
  \boldsymbol{f}_{\mathrm{att}}(\boldsymbol{x})= \int_\Omega \boldsymbol{f}(\boldsymbol{x},\boldsymbol{y})d\boldsymbol{y}
  =  A_\sigma \int_\Omega\rho(\boldsymbol{y})\nabla_{\boldsymbol{x}}W(\boldsymbol{x}-\boldsymbol{y},{R_{\mathrm{att}}})d\boldsymbol{y},
\end{align}
which implies  that the attractive force scales with the nonlocal gradient of density.

\begin{figure}[!b]
  \centering
\begin{tikzpicture}[scale=1.5]
  \draw[dashed] (0,0,-0.5) -- (1,0,-0.5);
  \draw[dashed] (1,0,-0.5) -- (1,1,-0.5);
  \draw (1,1,-0.5) -- (0,1,-0.5);
  \draw[dashed] (0,1,-0.5) -- (0,0,-0.5); %
  \draw (0,0,2.5) -- (1,0,2.5) -- (1,1,2.5) -- (0,1,2.5) -- cycle; %
  \draw[dashed] (0,0,-0.5) -- (0,0,0 ); %
  \draw[dashed] (1,0,-0.5) -- (1,0,-0.0);
  \draw(1,0,-0.0) -- (1,0,2.5);
  \draw (1,1,-0.5) -- (1,1,2.5);
  \draw (0,1,-0.5) -- (0,1,2.5);

  \fill[blue, opacity=0.2] (0,0,2.5) -- (1,0,2.5) -- (1,1,2.5) -- (0,1,2.5) -- cycle;
  \fill[blue, opacity=0.2] (0,1,-0.5) -- (1,1,-0.5) -- (1,1,2.5) -- (0,1,2.5) -- cycle;
  \fill[blue, opacity=0.2] (1,0,-0.5) -- (1,0,2.5) -- (1,1,2.5) -- (1,1,-0.5) -- cycle;

  \fill[red, opacity=0.5] (2.1,1,-0.5) -- (1,1,-0.5) -- (1,1,2.5) -- (2.1,1,2.5) -- cycle;
  \node at (1.7,1.5,0.5) [red]{$V_1$};
  \node at (0,0,1.7) [blue]{$V_2$};

  \filldraw (1,1,1) circle (1pt) node[above right]{$\boldsymbol{x}_0$};
  \draw[->,red,thick](1,1,1)--(0.5,1,1) node[above]{$\boldsymbol{t}$};


  \begin{scope}[shift={(1,1,1)}, canvas is xy plane at z=0]
    \draw (0:1) arc (0:180:1);
    \draw (0:1) arc (0:-90:1);
    \draw[ dashed] (180:1) arc (180:270:1);
  \end{scope}

    \begin{scope}[canvas is xz plane at y=1]
    \draw (1,1) circle (1);
  \end{scope}

\draw[|<->|, dashed] (0,-0.2,2.5) -- node[below] {$R_{\mathrm{att}}$} (1,-0.2,2.5);
\draw[|<->|, dashed] (1,-0.2,2.5) -- node[below] {$R_{\mathrm{att}}$} (2,-0.2,2.5);
\draw[|<->|, dashed] (-0.2,1,-0.5) -- node[above] {$L$} (-0.2,1,2.5);
\end{tikzpicture}
\caption{Schematic of cohesion at the gas-liquid contact surface}
\label{Schematic_cohesion}
\end{figure}
The precise mathematical formulation of $A_\sigma$ is of critical importance, as its rigorous definition serves as the foundation for ensuring both the algorithm's generalizability across diverse scenarios and its reliability in practical applications.
Now we will establish a quantitative relationship between the parameter $A_\sigma$ and surface tension $\sigma$ through theoretical derivation.
Notably, the droplet's curvature radius $R$ is typically much larger than the cohesive force interaction radius $R_{\mathrm{att}}$. This scale separation justifies our model simplification that treats the liquid-gas interface as locally planar within the $R_{\mathrm{att}}$ domain (i.e., $R \rightarrow \infty$ approximation). While this planar approximation introduces minor errors in the interaction integrals, the accuracy can be systematically improved by reducing $R_{\mathrm{att}}$ through mesh refinement.
Therefore, we can determine the relationship between $A_\sigma$ and surface tension $\sigma$ by studying the tension on a flat surface.
Let $\varepsilon$ denote the thickness of the surface. At any given point $\boldsymbol{x}_0$ on the surface, consider an arbitrary unit tangent vector $\boldsymbol{t}$ at $\boldsymbol{x}_0$.
We use $V_1$ to describe the surface particles on one side of the normal plane of $\boldsymbol{t}$, and $V_2$ to denote all particles on the opposite side. Here,
$V_1$ has a thickness of $\varepsilon$, while both  $V_1$ and  $V_2$  are strip-shaped regions of length $L(L\gg {R_{\mathrm{att}}})$ (see Fig \ref{Schematic_cohesion}). Indeed, the precise relationship between $A_\sigma$ and $\sigma$ can be formally established through the following theorem.

\begin{theorem}[{\bf Effective surface tension generation}]\label{Th1}
Consider a fluid domain $\Omega$ with constant density $\rho(\boldsymbol{x}) = \rho_0$ for all $\boldsymbol{x} \in \Omega$. Let $\overline{f}_{\mathrm{att}}$ denote the magnitude of the resultant attractive force along direction $\boldsymbol{t}$ exerted by the fluid in region $V_2$ on the surface of $V_1$, defined as
\begin{align}\label{magofF}
\overline{f}_{\mathrm{att}} = \int_{V_1} \int_{V_2} \left[ \boldsymbol{f}(\boldsymbol{x}, \boldsymbol{y}) \cdot \boldsymbol{t} \right]_+ \, d\boldsymbol{y} \, d\boldsymbol{x},
\end{align}
where $[\cdot]_+$ denotes the positive part, and $\boldsymbol{f}$ is the pairwise force given in \eqref{f_c}.
If the coefficient $A_\sigma$ in the \eqref{f_c} takes the form
\begin{align}\label{new_f}
A_\sigma = \frac{4\sigma}{\rho_0 \varepsilon} \quad \text{such that}\quad  \boldsymbol{f}(\boldsymbol{x},\boldsymbol{y})
  =\frac{4\sigma}{\rho_0\varepsilon} \rho(\boldsymbol{y}) \nabla_{\boldsymbol{x}}W(\boldsymbol{x}-\boldsymbol{y},{R_{\mathrm{att}}}),
\end{align}
then the attractive force satisfies
\begin{align}\label{result}
\overline{f}_{\mathrm{att}} = \sigma L.
\end{align}
This establishes that the cohesive force $\boldsymbol{f}$ generates an effective fluid interface with surface tension coefficient $\sigma$.
\end{theorem}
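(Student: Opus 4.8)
The plan is to collapse the double volume integral in \eqref{magofF} onto the kernel normalization condition through a chain of geometric reductions, with the factor $4$ in $A_\sigma$ emerging as the product of two independent halvings. First I would adopt coordinates $(x_1,x_2,x_3)$ adapted to Fig.~\ref{Schematic_cohesion}: place $\mathbf{x}_0$ at the origin, take $\mathbf{t}=-\mathbf{e}_1$ so the normal plane of $\mathbf{t}$ is $\{x_1=0\}$, let $\mathbf{e}_2$ be the outward interface normal so the liquid occupies $\{x_2\le 0\}$, and let $\mathbf{e}_3$ run along the strip. Then $V_1=\{x_1\ge 0,\ -\varepsilon\le x_2\le 0,\ |x_3|\le L/2\}$ is the thin surface slab and $V_2=\{x_1\le 0,\ x_2\le 0,\ |x_3|\le L/2\}$ is the bulk region on the far side of the cut. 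Since $\rho\equiv\rho_0$, the constant $A_\sigma\rho_0$ factors out of \eqref{magofF}, leaving the purely geometric quantity $I=\int_{V_1}\int_{V_2}\big[\nabla_{\mathbf{x}}W(\mathbf{x}-\mathbf{y},R_1)\cdot\mathbf{t}\big]_+\,d\mathbf{y}\,d\mathbf{x}$ to evaluate.

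The next step is to remove the positive part. Because $W$ is radial and decreasing, $\nabla_{\mathbf{x}}W(\mathbf{x}-\mathbf{y},R_1)$ is parallel to $\mathbf{y}-\mathbf{x}$, so with $r=|\mathbf{x}-\mathbf{y}|$ one has $\nabla_{\mathbf{x}}W\cdot\mathbf{t}=\tfrac{|W'(r)|}{r}\,(\mathbf{x}-\mathbf{y})\cdot\mathbf{e}_1\ge 0$ for every pair $(\mathbf{x},\mathbf{y})\in V_1\times V_2$, since the first coordinates obey $x_1\ge 0\ge y_1$; hence $[\,\cdot\,]_+$ may be dropped and the integral becomes linear. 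I would then apply the divergence theorem in the $\mathbf{y}$ variable, using $\nabla_{\mathbf{x}}W=-\nabla_{\mathbf{y}}W$, to write $\int_{V_2}\nabla_{\mathbf{x}}W\,d\mathbf{y}=-\int_{\partial V_2}W\,\mathbf{n}\,dS$. Projecting onto $\mathbf{t}=-\mathbf{e}_1$ annihilates every face whose normal is orthogonal to $\mathbf{e}_1$—namely the interface face $\{x_2=0\}$ and the two end caps $\{x_3=\pm L/2\}$—leaving only the cut face $\Sigma=\{x_1=0,\ x_2\le 0,\ |x_3|\le L/2\}$, whose normal is $\mathbf{e}_1$. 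This gives the reduction $\int_{V_2}\nabla_{\mathbf{x}}W\cdot\mathbf{t}\,d\mathbf{y}=\int_{\Sigma}W(\mathbf{x}-\mathbf{y})\,dS_{\mathbf{y}}$, so that $I=\int_{V_1}\int_{\Sigma}W(\mathbf{x}-\mathbf{y})\,dS_{\mathbf{y}}\,d\mathbf{x}$.

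Finally I would evaluate $I$ using translation invariance along the strip together with the normalization of $W$. Introducing the relative coordinate along $\mathbf{e}_3$, the double integration over $[-L/2,L/2]^2$ yields the factor $L$ up to $O(R_1/L)$ end corrections, while collapsing the thin slab $-\varepsilon\le x_2\le 0$, across which the integrand is essentially constant, yields the factor $\varepsilon$. What remains is $\int_{a\ge 0}\int_{b\le 0}\int_{-\infty}^{\infty}W\big(\sqrt{a^2+b^2+c^2},R_1\big)\,dc\,db\,da$, the integral of the radial kernel over the quarter-region $\{a\ge 0,\ b\le 0\}$, which by symmetry equals $\tfrac14\int W\,d\mathbf{r}=\tfrac14$ by the normalization condition. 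Collecting the factors gives $I=\tfrac14\varepsilon L$, whence $\overline{f}_{\mathrm{att}}=A_\sigma\rho_0\,I=\tfrac{4\sigma}{\rho_0\varepsilon}\,\rho_0\,\tfrac{\varepsilon L}{4}=\sigma L$, as asserted; the factor $4$ is thereby exhibited as one halving from the cut ($a\ge 0$) and one from the liquid filling only $\{b\le 0\}$.

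I expect the principal difficulty to lie in the two limiting reductions rather than in the algebra: one must verify that the finite strip length in the $x_3$-convolution and the finite slab thickness $\varepsilon$ contribute only lower-order corrections, so that the stated identity $\overline{f}_{\mathrm{att}}=\sigma L$ is to be read in the thin-interface, long-strip regime ($\varepsilon$ small, $L\gg R_1$) consistent with the planar ($R\to\infty$) approximation already invoked. A secondary point demanding care is the bookkeeping of outward normals in the divergence-theorem step, since a sign or face error there would displace precisely the factor of $4$ that the theorem is designed to pin down.
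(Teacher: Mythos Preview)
Your argument is correct and reaches the same conclusion, but by a genuinely different route from the paper. The paper works in spherical coordinates centred at $\mathbf{x}$: after the thin-slab reduction (which you also make), it carries the full iterated integral
\[
G_f=\int_0^{R_1}\!\int_0^{\pi}\!\int_0^{\arccos(s/R_1)}\!\int_{s/\cos\theta}^{R_1} f(r)\cos\theta\, r^2\sin\theta\,dr\,d\theta\,d\phi\,ds
\]
through a sequence of Fubini swaps to obtain $G_f=\tfrac{\pi}{3}\int_0^{R_1}f(r)r^3\,dr$, and only then recovers the normalization via integration by parts ($f=\partial_r\tilde W$) to get $G_f=-\tfrac14$. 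Your divergence-theorem step eliminates the gradient at the outset, so that the derivative $f(r)$ never appears: the inner integral over $V_2$ collapses to a surface integral of $W$ on the cut plane, and after the same $L$ and $\varepsilon$ reductions the remaining object is simply $\int_{a\ge 0}\int_{b\le 0}\int_{\mathbb R}W\,dc\,db\,da$, which is $\tfrac14$ by radial symmetry. Your approach is shorter and makes the provenance of the factor~$4$ transparent (two independent halvings: the cut plane and the liquid half-space), while the paper's computation, though lengthier, has the virtue of being a direct evaluation that does not hinge on recognising which boundary faces survive the projection onto~$\mathbf t$. Both proofs rely on the same thin-slab and long-strip approximations you flag at the end, and both are exact only in that regime.
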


\begin{proof}
Starting from the definition \eqref{magofF} and the artificial cohesive force \eqref{f_c}, we obtain
\begin{align}\label{eqsimgaL}
\overline{f}_{\mathrm{att}}
&= A_\sigma \rho_0 \int_{V_1} \int_{V_2} \left[ \nabla_{\boldsymbol{x}} W(\boldsymbol{x} - \boldsymbol{y}, R_{\mathrm{att}}) \cdot \boldsymbol{t} \right]_+ \, d\boldsymbol{y} \, d\boldsymbol{x} \notag \\
&= A_\sigma \rho_0 \int_{V_1} \int_{V_2} \left[ f(|\boldsymbol{x} - \boldsymbol{y}|) \frac{\boldsymbol{x} - \boldsymbol{y}}{|\boldsymbol{x} - \boldsymbol{y}|} \cdot \boldsymbol{t} \right]_+ \, d\boldsymbol{y} \, d\boldsymbol{x},
\end{align}
where we used the relation $f(|\boldsymbol{x}|) = \partial_{|\boldsymbol{x}|} W(\boldsymbol{x}, R_{\mathrm{att}})$ and $\nabla_{\boldsymbol{x}} |\boldsymbol{x}| = \boldsymbol{x}/|\boldsymbol{x}|$.

Let $\theta$ be the angle between $\boldsymbol{t}$ and $\boldsymbol{y} - \boldsymbol{x}$. The integrand simplifies to
\[
\frac{\boldsymbol{x} - \boldsymbol{y}}{|\boldsymbol{x} - \boldsymbol{y}|} \cdot \boldsymbol{t} = -\cos \theta.
\]
By exploiting the symmetry of $V_1$, the volume integral of $\boldsymbol{x}$ in \eqref{eqsimgaL} can be reduced to a line integral via $\boldsymbol{x} = \boldsymbol{x}_0 - s \boldsymbol{t}$.
Furthermore, adopting spherical coordinates $(r, \theta, \phi)$ centered at $\boldsymbol{x}$ with the transformation
\[
\boldsymbol{y} = \boldsymbol{x} + r
\begin{pmatrix}
-\cos \theta \\
\sin \theta \cos \phi \\
-\sin \theta \sin \phi
\end{pmatrix} \text{ and } r = |\boldsymbol{x} - \boldsymbol{y}|,
\]
\eqref{eqsimgaL} can be reformulated as
\begin{align}\label{eqsimgaL2}
\overline{f}_{\mathrm{att}}
&= -A_\sigma \rho_0 L \varepsilon \int_{0}^{R_{\mathrm{att}}} \int_{0}^{\pi} \int_{0}^{\arccos(s/R_{\mathrm{att}})} \int_{s/\cos \theta}^{R_{\mathrm{att}}} f(r) \cos \theta  r^2 \sin \theta  dr d\theta  d\phi  ds \nonumber\\
&:=-A_\sigma \rho_0 L \varepsilon G_f,
\end{align} 
where $G_f$ represents the quadruple integral expression. 
By use of Lemma \ref{quadruple_integral}, we obtain the simplified form
\begin{align}\label{q_integral}
G_f=\frac{ \pi}{3}\int_0^{R_{\mathrm{att}}} f(r)r^3 dr.
\end{align}
Noting that $f(|\boldsymbol{x}|)=\partial_{|\boldsymbol{x}|}W(\boldsymbol{x},R_{\mathrm{att}})$, it follows from Lemma \ref{quadruple_integral} that 
\begin{align}\label{result_of_Gf}
G_f=-\frac{1}{4}.
\end{align} 
Substituting \eqref{result_of_Gf} into \eqref{eqsimgaL2} gives the intermediate result
\begin{align}\label{eqnew}
\overline{f}_{\mathrm{att}} = \frac{\varepsilon L A_\sigma \rho_0}{4}.
\end{align}
Finally, by incorporating the expression for $A_\sigma$ from \eqref{new_f} into \eqref{eqnew}, we arrive at the desired result \eqref{result}. 

The physical interpretation of this result follows directly from the definition of surface tension: the interfacial force per unit length between regions $V_1$ and $V_2$ is exactly $\overline{f}_{\mathrm{att}}/L = \sigma$. This demonstrates conclusively that the pairwise cohesive force $\boldsymbol{f}(\boldsymbol{x}, \boldsymbol{y})$ defined in \eqref{new_f} generates an effective surface tension with coefficient $\sigma$ at the interface between the fluid regions. This completes the proof.
\end{proof}

%

\begin{remark}
Theorem \ref{Th1} is established under the assumption of zero interface curvature. 
Nevertheless, this approach maintains reasonable accuracy with a mathematically verifiable relative error below 4.5\% for scenarios where  the radius of curvature is no less than $R_{\mathrm{att}}$. The relative error bound was confirmed through  the selection of cubic spline kernel functions and numerical integration of Eq. \eqref{eqsimgaL}. 
To avoid unnecessary computational complexity associated with curvature corrections, we retain the original cohesive force formulation \eqref{new_f} without modification.
\end{remark}

\subsection{Modeling of substrate contact repulsion}
The fluid-solid interaction poses significant computational challenges in particle-based simulations. While conventional no-penetration conditions suffice for macroscopic phenomena like dam-break flows, they become inadequate for microscopic interfacial processes such as droplet bouncing. This necessitates precise characterization of repulsive forces on fluid particles, particularly when dealing with geometrically complex boundaries.

In conventional contact algorithms, computing repulsive forces (using Equation \eqref{repulsive}) involves both the substrate's normal direction and the vertical distance. However, this requirement becomes overly restrictive for substrates with complex geometries, such as the one illustrated in Figure \ref{droplet_impact}. {A major limitation of this method is its inefficiency in handling surfaces with high curvature. Computational performance is constrained by a critical trade-off: achieving accuracy requires fine spatial discretization, which entails significant computational cost, whereas coarse discretization leads to instability due to inaccurate definitions of the surface normal and distance, a challenge particularly acute in concave regions.}
To overcome these limitations, we propose an alternative approach that computes repulsive forces directly between particle pairs. In fact, this strategy effectively preserves the characteristics of the microcone array.  Through the implementation of parallel computing techniques, we maintain manageable computational costs while achieving significantly improved stability and broader applicability across various geometric configurations.

{
With this in mind, we now introduce the definition of the repulsive force for our model. 
Consider two particles located at positions $\boldsymbol{x}$ and $\boldsymbol{y}$, with particle $\boldsymbol{x}$ being subjected to the repulsive force field generated by particle $\boldsymbol{y}$. The potential energy of particle $\boldsymbol{x}$ can be approximated by the function
$$
\Phi(r) = k_0 W(r,R_{\mathrm{rep}}),
$$
where $W(r,R_{\mathrm{rep}})$ is the Wendland quintic kernel for three-dimension,  $r = |\boldsymbol{x} - \boldsymbol{y}|$ denotes the inter-particle distance and  $R_{\mathrm{rep}}$ represents the radius of repulsive interaction. Once the kernel function $W$ is specified, $k_0>0$ is a parameter that depends only on the physical properties of the particles at $\boldsymbol{x}$ and $\boldsymbol{y}$. Dimensional analysis shows that the parameter $k_0$ has dimensions of $[M][L]^{-1}[T]^{-2}$, which corresponds to pressure or stress.
The repulsive force exerted on particle $\boldsymbol{x}$ by particle $\boldsymbol{y}$ is then given by the negative gradient of the potential:
$$
\boldsymbol{F}(\boldsymbol{x}, \boldsymbol{y}) = -k_0  \nabla_{\boldsymbol{x}} W(|\boldsymbol{x} - \boldsymbol{y}|, R_{\mathrm{rep}}).
$$ 
Let $\Gamma$ denote the set of all microcone array points, as illustrated in Figure~\ref{Schematic}.   
The total repulsive force acting on fluid particle $\boldsymbol{x}$ due to all surrounding boundary particles within a domain $\Gamma$ is obtained by integrating the pairwise force over the domain:
$$
\boldsymbol{F}(\boldsymbol{x}) = \int_\Gamma \boldsymbol{F}(\boldsymbol{x}, \boldsymbol{y})   d\boldsymbol{y}=-k_0\int_\Gamma \nabla_xW(\boldsymbol{x}-\boldsymbol{y},R_{\mathrm{rep}}) d\boldsymbol{y}.
$$
}

{While the cohesion parameter \(A_\sigma\) is directly derived from the macroscopic surface tension \(\sigma\), the repulsion parameter \(k_0\) lacks a direct macroscopic analogue.  
The parameter \(k_0\)  is a well-defined macroscopic representation of microscopic repulsive forces. In the numerical simulations presented herein, its value is robustly calibrated through controlled experiments. It serves the critical numerical function of preventing unphysical boundary penetration.
}

\begin{figure}[!t]
  \centering
 \begin{tikzpicture}
\draw (0,1.5) circle (1);

    \draw (-0.5,0) arc[start angle=0, end angle=180, radius=0.2];
    \draw (-0.5,0) -- (-0.4,-0.9);
    \draw (-0.9,0) -- (-1,-1);

    \draw (0.2,0) arc[start angle=0, end angle=180, radius=0.2];
    \draw (0.2,0) -- (0.3,-0.9);
    \draw (-0.2,0) -- (-0.3,-0.9);

    \draw (0.9,0) arc[start angle=0, end angle=180, radius=0.2];
    \draw (0.9,0) -- (1,-1);
    \draw (0.5,0) -- (0.4,-0.9);

    \draw  (-1,-1) -- (1,-1); 
    \draw  (-0.4,-0.9)  -- (-0.3,-0.9);
    \draw  (0.3,-0.9) -- (0.4,-0.9);

    \draw[->, dashed] (0,1.5) -- node[above left] {$R$} (0.9,1.9); 
\node at (0,-0.8)   {$\Gamma$};
\node at (0.5,1)   {$\Omega$};
\end{tikzpicture}
\caption{Schematic diagram of a droplet with a substrate}
\label{Schematic}
\end{figure}

\subsection{Governing equations with nonlocal interface and boundary interactions}

With the interfacial tension--cohesion relationship quantified and repulsive forces expressed in a nonlocal form, we now integrate these formulations with the conservation laws of mass and momentum to derive the governing equations of the system:
\begin{align}
    \frac{\mathrm{d} \rho}{\mathrm{d} t} &= -\rho \nabla \cdot \boldsymbol{u}, \\
    \frac{\mathrm{d} \boldsymbol{u}}{\mathrm{d} t} &= -\frac{\nabla p}{\rho} + \frac{\mu}{\rho} \nabla^2 \boldsymbol{u} + \boldsymbol{g}
    + \frac{4\sigma}{\rho^2 \varepsilon} \int_\Omega \rho(\boldsymbol{y}) \nabla_{\boldsymbol{x}} W(\boldsymbol{x} - \boldsymbol{y}, R_{\mathrm{att}}) \, \mathrm{d}\boldsymbol{y} \\
    &\quad - \frac{1}{\rho} k_0\int_\Gamma \nabla_{\boldsymbol{x}} W(\boldsymbol{x} - \boldsymbol{y}, R_{\mathrm{rep}}) \, \mathrm{d}\boldsymbol{y}, \notag\\
    \frac{\mathrm{d} \boldsymbol{x}}{\mathrm{d} t} &= \boldsymbol{u},
\end{align}
where the pressure $p$ is determined by the equation of state
\begin{equation}
    p = \frac{c^2 \rho_0}{\gamma} \left( \left( \frac{\rho}{\rho_0} \right)^\gamma - 1 \right).
\end{equation}

\section{Numerical Methods}
\subsection{Foundations of SPH Methodology}
In the Smoothed Particle Hydrodynamics framework, the \textit{kernel approximation} (also termed \textit{integral approximation}) of a scalar field $f(\boldsymbol{x})$ is given by
\begin{equation}\label{eq2}
    f_I(\boldsymbol{x}) := \int_\Omega f(\boldsymbol{x}') W(\boldsymbol{x} - \boldsymbol{x}', h) \, \mathrm{d}\boldsymbol{x}' \approx f(\boldsymbol{x}),
\end{equation}
where $W$ denotes the smoothing kernel function with support radius $h$. Introducing the normalized distance $q = |\boldsymbol{x} - \boldsymbol{y}|/(h/2)$, we consider two widely used kernel functions and one modified kernel function:

1). \textbf{Cubic Spline Kernel} \cite{monaghan1992smoothed}:
\begin{align}
    W^{\text{cs}}(\boldsymbol{x}-\boldsymbol{y},h) = \tilde{W}^{\text{cs}}(q) =
    \begin{cases}
        \sigma_1 \left[1 - \frac{3}{2}q^2 \left(1 - \frac{q}{2}\right)\right], & 0 \leq q \leq 1, \\
        \sigma_1(2 - q)^3/{4}, & 1 < q \leq 2, \\
        0, & q > 2,
    \end{cases}
\end{align}
where $\sigma_1 = [\pi (h/2)^3]^{-1}$ for three-dimension.

2). \textbf{Wendland Quintic Kernel} \cite{gomez2010state}:
\begin{align}
    W^{\text{wq}}(\boldsymbol{x}-\boldsymbol{y},h) = \tilde{W}^{\text{wq}}(q) =
    \begin{cases}
        \sigma_2 (1 - q/2)^4 (2q + 1), & 0 \leq q \leq 2, \\
        0, & q > 2,
    \end{cases}
\end{align}
with $\sigma_2 = 21/[16\pi (h/2)^3]$ for three-dimensions. The Wendland Quintic Kernel function can constitute a good choice in terms of computational accuracy and effectiveness, since it provides a higher order of interpolation with a computational cost comparable to the quadratic kernel.

{
3). \textbf{New Wendland Kernel}:
\begin{align}
    W^{\text{new}}(\boldsymbol{x}-\boldsymbol{y},h) = \tilde{W}^{\text{new}}(q) =
    \begin{cases}
        \sigma_3 (1 - q/2)^5 (5q/2 + 1), & 0 \leq q \leq 2, \\
        0, & q > 2,
    \end{cases}
\end{align}
with $\sigma_3 = 189/[128\pi (h/2)^3]$ for three-dimension. The newly proposed kernel function is a straightforward modification of the Wendland Quintic Kernel, achieved by increasing the polynomial degree, which brings it closer to approximating a repulsive potential. In practice, such a kernel is relatively more effective at capturing the characteristics of short-range forces. 
}

Following the idea of Eq. \eqref{eq2}, we can readily derive an approximation of the derivative function using integration by parts
\begin{equation}\label{eq3}
[D^\beta f]_I(\boldsymbol{x}) = -\int_\Omega f(\boldsymbol{x}') D^\beta_{\boldsymbol{x}'} W(\boldsymbol{x}-\boldsymbol{x}', h) \, \mathrm{d}\boldsymbol{x}',
\end{equation}
where $\beta$ denotes a multi-index characterizing the derivative order.

Discretization of the kernel approximation yields the \textit{particle approximation} for both scalar fields and vector field divergences:
\begin{align}
\langle f \rangle_i &= \sum_j \frac{m_j}{\rho_j} f_j W_{ij}^{\text{wq}}, \label{PAf} \\
\langle \nabla \cdot \boldsymbol{f} \rangle_i &= -\sum_j \frac{m_j}{\rho_j} \boldsymbol{f}_j \cdot \nabla_j W_{ij}^{\text{wq}}, \label{PADf}
\end{align}
where $f_j := f(\boldsymbol{x}_j)$, $W_{ij}^\text{wq} := W^{\text{wq}}(\boldsymbol{x}_i-\boldsymbol{x}_j,h)$ and $\nabla_jW_{ij}^\text{wq} := \nabla_{\boldsymbol{x}_j}W^{\text{wq}}(\boldsymbol{x}_i-\boldsymbol{x}_j,h)$.
Here, $i$ denotes the interpolating particle and $j$ refers to the neighbouring particles within the support. $\frac{m_j}{\rho_j}$ represents the volume of the particle $j$.

In practical applications, modified versions of Eq. \eqref{PADf} are typically employed rather than their original forms. When a derivative function is multiplied or divided by density, it can be incorporated within the summation operator. Specifically, we consider the following two identities \cite{monaghan1992smoothed}:
\begin{align}
[\rho\nabla\cdot\boldsymbol{f}](\boldsymbol{x}) &= \nabla\cdot(\rho \boldsymbol{f})(\boldsymbol{x}) - \boldsymbol{f}(\boldsymbol{x}) \cdot \nabla\rho(\boldsymbol{x}), \label{eq:div_identity} \\
\left[\frac{\nabla f}{\rho}\right](\boldsymbol{x}) &= \nabla\left(\frac{f}{\rho}\right)(\boldsymbol{x}) + \frac{f(\boldsymbol{x})}{\rho^2(\boldsymbol{x})}\nabla\rho(\boldsymbol{x}). \label{eq:grad_identity}
\end{align}
Combining Eq. \eqref{PADf}, we obtain the following commonly used SPH formulas:
\begin{align}
    \langle \rho\nabla\cdot\boldsymbol{u} \rangle_i &= -\sum_j m_j \boldsymbol{u}_{ij} \cdot \nabla_i W_{ij}^{\text{wq}},  \\
    \left\langle \frac{\nabla p}{\rho} \right\rangle_i &= \sum_j m_j \left( \frac{p_i}{\rho_i^2} + \frac{p_j}{\rho_j^2} \right) \nabla_i W_{ij}^{\text{wq} }.
\end{align}
Here, $\boldsymbol{u}_{ij}:=\boldsymbol{u}_{i}-\boldsymbol{u}_{j}$.
Another common SPH format is as follows
\begin{align}
\left\langle \frac{\mu}{\rho} \nabla^2 \boldsymbol{u}\right\rangle_i=\sum_j \frac{ m_j (\mu_i+\mu_j)}{(\rho_i+\rho_j)^2/4} \frac{\boldsymbol{x}_{ij}\cdot\nabla_iW_{ij}^\text{wq}}{|\boldsymbol{x}_{ij}|^2+(0.01 h_{ij})^2}\boldsymbol{u}_{ij}.
\end{align}

\subsection{SPH formulations}
While well-established SPH discretization schemes exist for local differential operators \cite{monaghan2005smoothed,liu2003smoothed,Smoothed000274685600002}, this work focuses on developing novel formulations for non-local operators. The SPH framework demonstrates particular advantages for non-local operators due to its intrinsic non-local nature through kernel approximations. Specifically, we have
\begin{align}
\left\langle \frac{\boldsymbol{f}_{\mathrm{att}}}{\rho} \right\rangle_i
&= \left\langle \frac{4\sigma}{\rho^2 \varepsilon} \int_\Omega \rho(\boldsymbol{y}) \nabla_{\boldsymbol{x}} W^{\text{cs}}(\boldsymbol{x}-\boldsymbol{y}, R_{\mathrm{att}}) \, \mathrm{d}\boldsymbol{y} \right\rangle_i \notag \\
&:= \frac{4\sigma}{\rho_0^2 \varepsilon} \sum_{j=1}^{N_f} m_j \nabla_i W_{ij}^{\text{cs},R_{\mathrm{att}}}, \label{eq:attractive_force}
\end{align}
and
\begin{align}
\left\langle \frac{\boldsymbol{F}}{\rho} \right\rangle_i
&= \left\langle -\frac{1}{\rho} k_0\int_\Gamma  \nabla_{\boldsymbol{x}} W^{\text{new}}(\boldsymbol{x}-\boldsymbol{y}, R_{\mathrm{rep}}) \, \mathrm{d}\boldsymbol{y} \right\rangle_i \notag \\
&:= -k_0 \sum_{k=1}^{N_s} \frac{m_k}{\rho_i \rho_k}  \nabla_i W_{ik}^{\text{new},R_{\mathrm{rep}}}, \label{eq:repulsive_force}
\end{align}
where $\nabla_iW_{ij}^{\text{cs},R_{\mathrm{att}}}:=\nabla_{\boldsymbol{x}_i}W^{\text{cs}}(\boldsymbol{x}_i-\boldsymbol{x}_j,R_{\mathrm{att}})$ and $\nabla_iW_{ij}^{\text{new},R_{\mathrm{rep}}}:=\nabla_{\boldsymbol{x}_i}W^{\text{new}}(\boldsymbol{x}_i-\boldsymbol{x}_j,R_{\mathrm{rep}})$.

Then the governing equations can be rewritten as follows. Let $N_f$ denote the number of fluid particles and $N_s$ denote the number of solid particles.
For any $1\leq i,j\leq N_f$, $1\leq k\leq N_s$, we solve the following system
\begin{align}
\frac{\mathrm{d} \rho_i}{\mathrm{d} t}=& \sum_{j=1}^{N_f+N_s}m_j  \boldsymbol{u}_{ij}\cdot\nabla_iW_{ij}^\text{wq}, \quad p_i=\frac{c^2\rho_0}{\gamma}\left(\left(\frac{\rho_i}{\rho_0}\right)^{\gamma}-1\right),\\
\frac{\mathrm{d} \boldsymbol{u}_i}{\mathrm{d} t}= & -\sum_{j=1}^{N_f}m_j\left(\frac{p_i}{\rho_i^2}+\frac{p_j}{\rho_j^2}+\Pi_{i,j}^{\text{art}}\right)\nabla_iW_{ij}^\text{wq}
 +\frac{4\sigma}{\varepsilon\rho_0^2}\sum_{j=1}^{N_f} m_j \nabla_iW_{ij}^{\text{cs},R_{\mathrm{att}}} \nonumber\\
&-\sum_{k=1}^{N_s} m_k\left(\frac{k_0}{\rho_k\rho_i}+\Pi_{i,k}^{\text{art}}\right) \nabla_iW_{ik}^{\text{new},R_{\mathrm{rep}}}
\nonumber\\
&+\sum_{j=1}^{N_f} \frac{ m_j (\mu_i+\mu_j)}{(\rho_i+\rho_j)^2/4} \frac{\boldsymbol{x}_{ij}\cdot\nabla_iW_{ij}^\text{wq}}{|\boldsymbol{x}_{ij}|^2+0.01 (h_{ij})^2}\boldsymbol{u}_{ij} +\boldsymbol{g}_i,  \\
\frac{\mathrm{d} \boldsymbol{x}_i}{\mathrm{d} t}=& (1-\epsilon)\boldsymbol{u}_{i}+\epsilon\sum_{j=1}^{N_f}m_j  \boldsymbol{u}_{ij} W_{ij}^\text{wq}.
 \end{align}
Here,  $\Pi_{a,b}^{\text{art}}$ is artificial viscosity term  proposed by Monaghan \cite{monaghan2005smoothed} and
\begin{align}
\Pi_{a,b}^{\text{art}}&=\begin{cases}
        \frac{-\alpha c\mu_{ab}}{\bar{\rho}_{ab}} &
        \boldsymbol{u}_{ab}\cdot\boldsymbol{x}_{ab}<0;\\
        0 & \boldsymbol{u}_{ab}\cdot\boldsymbol{x}_{ab}\geq0;
  \end{cases},       \mu_{ab}=\frac{h\boldsymbol{u}_{ab}\cdot\boldsymbol{x}_{ab}}{\boldsymbol{x}_{ab}^{2}+0.01h^{2}},
\bar{\rho}_{ab} = \frac{\rho_a + \rho_b}{2}.
\end{align}
We set $\epsilon=0.5$ \cite{monaghan1994simulating} and $\alpha=0.015$  to guarantee the stability of the numerical scheme.

\subsection{Time integration}
The Predict-Evaluate-Correct method is adopted for time integration, offering high computational efficiency and minimal storage demands. For each time step,  one can get $\rho_i^{n+1}$, $\boldsymbol{u}_i^{n+1}$, $\boldsymbol{x}_i^{n+1},$ $\frac{\mathrm{d} \rho_i^{n+1/2}}{\mathrm{d} t}$, $\frac{\mathrm{d} \boldsymbol{u}_i^{n+1/2}}{\mathrm{d} t}$, $\frac{\mathrm{d} \boldsymbol{x}_i^{n+1/2}}{\mathrm{d} t}$ by given $\rho_i^{n}$, $\boldsymbol{u}_i^{n}$, $\boldsymbol{x}_i^{n}$, $\frac{\mathrm{d} \rho_i^{n-1/2}}{\mathrm{d} t}$, $\frac{\mathrm{d} \boldsymbol{u}_i^{n-1/2}}{\mathrm{d} t}$, $\frac{\mathrm{d} \boldsymbol{u}_i^{n-1/2}}{\mathrm{d} t}$.
Firstly, we predict the value at time $t=t_{n+\frac{1}{2}}$ by
\begin{align}
\left\{
\begin{array}{rl}
\rho_i^{n+1/2}&=\rho_i^n+\frac{\Delta t}{2}\cdot \frac{\mathrm{d} \rho_i^{n-1/2}}{\mathrm{d} t},\\
\boldsymbol{u}_i^{n+1/2}&=\boldsymbol{u}_i^n+\frac{\Delta t}{2}\cdot \frac{\mathrm{d} \boldsymbol{u}_i^{n-1/2}}{\mathrm{d} t},\\
\boldsymbol{x}_i^{n+1/2}&=\boldsymbol{x}_i^n+\frac{\Delta t}{2}\cdot  \frac{\mathrm{d} \boldsymbol{x}_i^{n-1/2}}{\mathrm{d} t},
\end{array}\right.
\end{align}
where $\Delta t$ is the given time step.
Secondly, we evaluate the derivative value at time $t=t_{n+\frac{1}{2}}$ by
\begin{align}
\left\{
\begin{array}{rl}
\frac{\mathrm{d} \rho_i^{n+1/2}}{\mathrm{d} t}&=\frac{\mathrm{d} \rho_i}{\mathrm{d} t}(\rho_i^{n+1/2},\boldsymbol{u}_i^{n+1/2},\boldsymbol{x}_i^{n+1/2}),\\
\frac{\mathrm{d} \boldsymbol{u}_i^{n+1/2}}{\mathrm{d} t}&= \frac{\mathrm{d} \boldsymbol{u}_i}{\mathrm{d} t}(\rho_i^{n+1/2},\boldsymbol{u}_i^{n+1/2},\boldsymbol{x}_i^{n+1/2}),\\
 \frac{\mathrm{d} \boldsymbol{x}_i^{n+1/2}}{\mathrm{d} t}&=\frac{\mathrm{d} \boldsymbol{x}_i}{\mathrm{d} t}(\rho_i^{n+1/2},\boldsymbol{u}_i^{n+1/2},\boldsymbol{x}_i^{n+1/2}).
\end{array}\right.
\end{align}
Finally, we can correct the value by
\begin{align}
\left\{
\begin{array}{rl}
\rho_i^{n+1}&=\rho_i^n+\Delta t\cdot \frac{\mathrm{d} \rho_i^{n+1/2}}{\mathrm{d} t},\\
\boldsymbol{u}_i^{n+1}&=\boldsymbol{u}_i^n+\Delta t\cdot \frac{\mathrm{d} \boldsymbol{u}_i^{n+1/2}}{\mathrm{d} t},\\
\boldsymbol{x}_i^{n+1}&=\boldsymbol{x}_i^n+\Delta t\cdot \frac{\mathrm{d} \boldsymbol{x}_i^{n+1/2}}{\mathrm{d} t}.
\end{array}\right.
\end{align}
This PEC scheme ensures second-order accuracy in time while maintaining numerical stability, making it particularly suitable for long-time simulations in particle-based methods.

\section{Numerical Validation}\label{sec:numerical_examples}
This section presents systematic validation of the proposed SPH framework through comprehensive simulations of droplet impact dynamics on superhydrophobic microcone arrays. The study examines a range of Weber numbers ($\textit{We}$), defined as:
\begin{align}\label{we}
\textit{We} = \frac{\rho v_0^2 r_0}{\sigma},
\end{align}
where $\rho = 1~\mathrm{g/cm}^3$ represents the water density and $\sigma = 0.0728~\mathrm{N/m}$ denotes the surface tension coefficient. The fluid properties include dynamic viscosity $\mu = 10^{-3}~\mathrm{Pa\cdot s}$ ($10^{-6}~\mathrm{g/(mm\cdot ms)}$) and gravitational acceleration $\boldsymbol{g} = 9.81~\mathrm{m/s}^2$.  
Quantitative validation against the experimental data of \cite{liu2014pancake} (see Supplementary Movies 1–3 in their original work, available at: \url{https://www.nature.com/articles/nphys2980\#Sec7}) confirms the accuracy of our numerical framework. For direct comparison, the corresponding simulations are provided in Supplementary Movies 1–3, showing consistent spatiotemporal patterns with the experiments.

\subsection{Geometric Configuration}
The microcone array exhibits the following dimensional characteristics:
\begin{itemize}
    \item Cone height: $H = 0.8~\mathrm{mm}$
    \item Base radius: $R_{\mathrm{max}} = 0.045~\mathrm{mm}$
    \item Tip radius: $R_{\mathrm{min}} = 0.01~\mathrm{mm}$
\end{itemize}
Arranged in a chessboard pattern with center-to-center spacing $L = 0.2~\mathrm{mm}$, the microcones form a complex structured surface with high curvature gradients. The impacting water droplet has initial radius $r_0 = 1.45~\mathrm{mm}$, with impact velocity $v_0$ determined by the target Weber number via Eq.~\eqref{we}. Figure~\ref{droplet_impact} shows the initial computational domain configuration, including the droplet's initial position relative to the micro-structured surface.

\subsection{Computational Parameters}
The SPH simulation utilizes carefully selected parameters to balance numerical accuracy with computational efficiency. We set the initial particle of fluid spacing to $\Delta x_f = 0.06~\mathrm{mm}$, providing sufficient resolution to capture detailed droplet dynamics. We set the initial particle of solid  spacing to $\Delta x_s = 0.02~\mathrm{mm}$, providing sufficient resolution to capture the microstructure.  The thickness of  the surface is set as $\varepsilon=0.06$ mm. For the kernel function implementation, we employ two distinct characteristic length scales: (1) an interaction radius of $R_{\mathrm{att}} = 0.26~\mathrm{mm}$ for cohesion simulation, and (2) a smoothing length of $h = R_{\mathrm{rep}} = 0.1625~\mathrm{mm}$ for field variable calculations and repulsive force modeling. 
This configuration ensures the long-range characteristics of attractive interactions while maintaining computational efficiency. 
The numerical speed of sound is set to \( c = 10(1~\mathrm{m/s} + v_0) \), and the time step is dynamically adjusted as \( \Delta t = 0.062 R_{\mathrm{att}}/c \) to maintain numerical stability. 
The model incorporates only one adjustable parameter,  parameter $k_0$, which takes fixed values of 0.0178 $\rm g/(mm\cdot ms^2)$. The key point is that the $k_0$ value remains constant regardless of changes in the Weber number, and its single-parameter nature allows it to be easily determined through simple calibration.

\subsection{Droplet Bouncing on Horizontal Microcone Arrays}
\label{subsec:horizontal}

\subsubsection{Case 1: Low-Weber-Number Impact ($\textit{We} = 7.1$)}
The experimental setup consists of a  water droplet with a diameter $r_0$ impacting the microcone array at a velocity $v_0 = 0.594$ m/s. Snapshots of the numerical simulations are presented in Figure \ref{fig:we7_1} and are compared to the experiment. Experimental observations reveal that under low Weber number conditions, droplets exhibit conventional bouncing behavior on micro-cone arrays, characterized by contraction prior to detachment from the substrate. The numerical simulations (Supplementary Movie 1) demonstrate remarkable temporal consistency with physical experiments throughout the entire process. Figure \ref{fig:we14_2} (a) presents the temporal evolution of the droplet’s horizontal diameter and its vertical height relative to the substrate during the bouncing process.
We define $t_{\uparrow}$ as the time instant when the droplet vertical height first reaches zero, and $t_\text{contact}$ as the moment when the droplet height first attains $R_{\mathrm{rep}}$.
A distinct downward impact is observed around $t = 10$ ms, consistent with experimental observations. This impact, in fact, provides the droplet with sufficient momentum to detach from the substrate.
\begin{figure}[t!]
    \centering
    \includegraphics[width=1\linewidth]{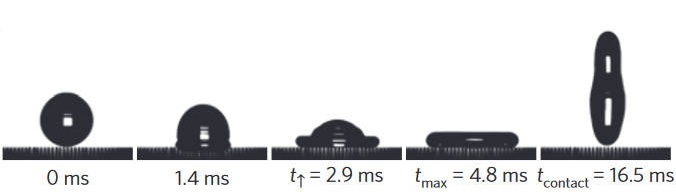} \\
    \includegraphics[width=0.19\linewidth]{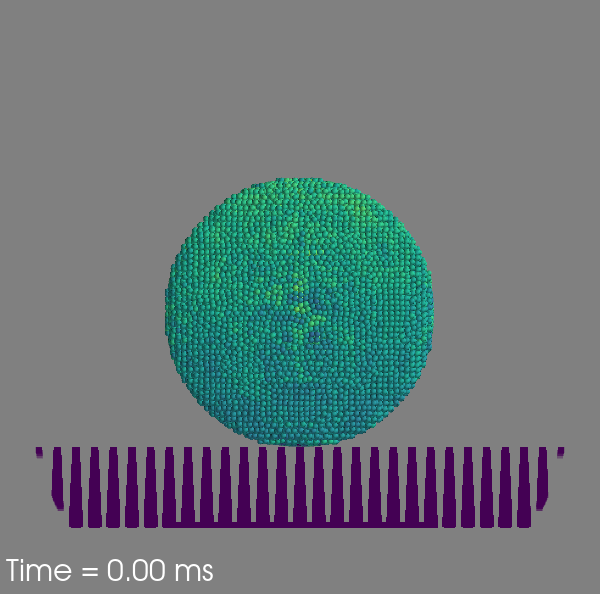}
    \includegraphics[width=0.19\linewidth]{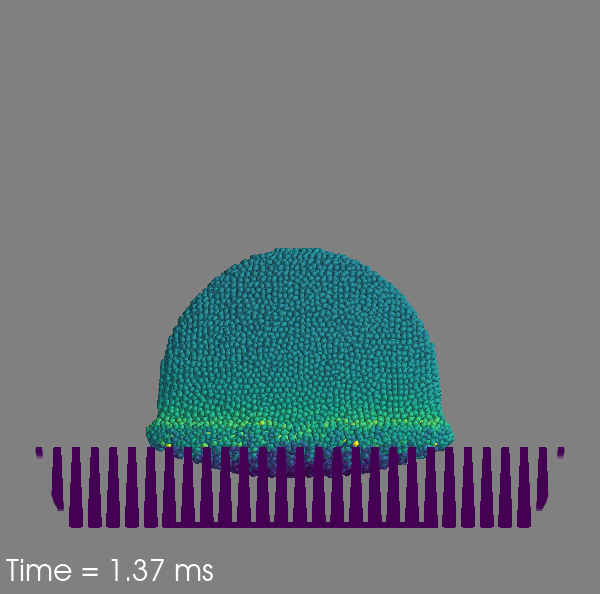}
    \includegraphics[width=0.19\linewidth]{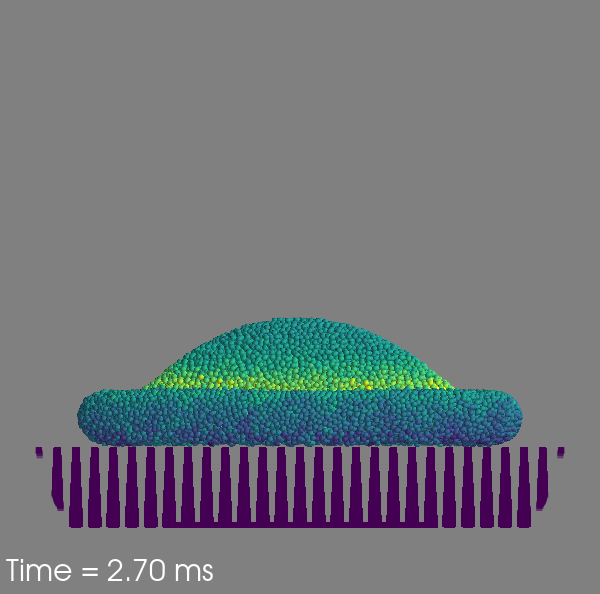}
    \includegraphics[width=0.19\linewidth]{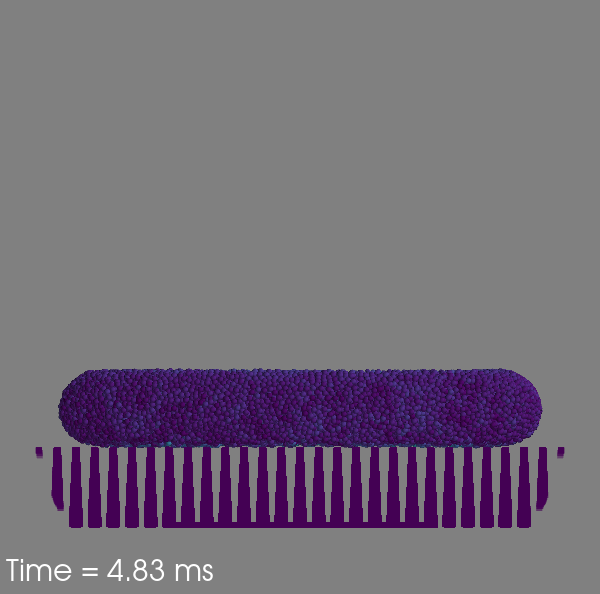}
    \includegraphics[width=0.19\linewidth]{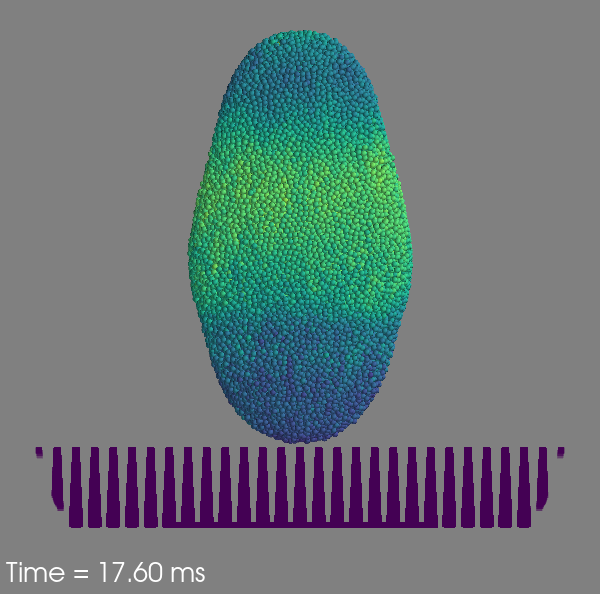}
    \caption{Comparison of simulated (bottom) and experimental (top; Copyright 2014 Springer Nature) droplet shapes during droplet bouncing at $\textit{We} = 7.1$. Time instances show   initial impact at $t = 0$ ms,   maximal spreading at $t = 4.8$ ms, and   complete rebound at $t = 16.5$ ms.}
    \label{fig:we7_1}
\end{figure}
\begin{figure}
\centering
\begin{tikzpicture}[scale=0.8]
    \begin{axis}[ylabel=Diameter and high (mm), xlabel=t(ms), legend pos=north east,
    title={(a) $We$=7.1}] 
    \addplot[blue, mark=.] table[x=Time, y=Diameter] {data/diameter_data_7.csv};
    \addlegendentry{Diameter}
    
    \addplot[red, mark=.] table [x=Time, y=High]{data/diameter_data_7.csv};
    \addlegendentry{High}
    
    \draw[blue, ->, >=stealth, thick] (axis cs:5.024,4.122) node[below] {$t_{\mathrm{max}}$} -- (axis cs:5.024,5.122);
    \draw[red, ->, >=stealth, thick] (axis cs:3, 1.00) node[above] {$t_{\uparrow}$} -- (axis cs:3.00,0.065);
    \draw[red, ->, >=stealth, thick] (axis cs:17.161,1.266) node[above] {$t_{\mathrm{contact}}$} -- (axis cs:17.161,0.266);
    \end{axis}
\end{tikzpicture} 
\begin{tikzpicture}[scale=0.8]
    \begin{axis}[ylabel=Diameter and high (mm), xlabel=t(ms), legend pos=north west,
    title={(b) $We$=14.1}]
    \addplot[blue, mark=.] table[x=Time, y=Diameter] {data/diameter_data_14.csv};
    \addlegendentry{Diameter}
    
    \addplot[red, mark=.] table [x=Time, y=High]{data/diameter_data_14.csv};
    \addlegendentry{High}
     
    \draw[blue, ->, >=stealth, thick] (axis cs:5.57, 4.98) node[below] {$t_{\mathrm{max}}$} -- (axis cs:5.57, 5.98);
    \draw[red, ->, >=stealth, thick] (axis cs:1.94, 0.057) node[left] {$t_{\uparrow}$} -- (axis cs:2.94, 0.057);
    \draw[red, ->, >=stealth, thick] (axis cs:3.82,1.177) node[above] {$t_{\mathrm{contact}}$} -- (axis cs:3.82,0.177);
    
    \end{axis}
\end{tikzpicture}
\caption{Diameter and hight of droplet with $\textit{We} = 7.1$ and $\textit{We} = 14.1$}
\label{fig:we14_2}
\end{figure}

\subsubsection{Case 2: Moderate-Weber-Number Impact ($\textit{We} = 14.1$)}
At higher impact energy ($v_0 = 0.837$ m/s), the droplet exhibits more complex dynamics. As illustrated in Figure \ref{fig:we14_1}, the droplet deviates from conventional bouncing behavior and exhibits a counterintuitive sequence: detachment from the substrate occurs prior to maximum spreading. The close agreement between numerical snapshots and experimental observations demonstrates the remarkable simulation capability of the proposed method. Furthermore, Figure \ref{fig:we14_2} (b) presents the temporal variations of droplet radius and vertical height obtained from numerical simulations (Supplementary Movie 2). A clear sequence is observed: after detaching from the substrate, the droplet continues to spread until reaching its maximum radius before subsequent contraction.
\begin{figure}[!t]
    \centering
    \includegraphics[width=\linewidth]{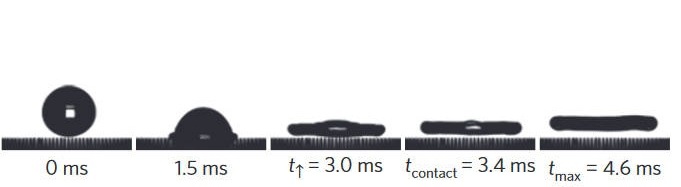} \\
    \includegraphics[width=0.19\linewidth]{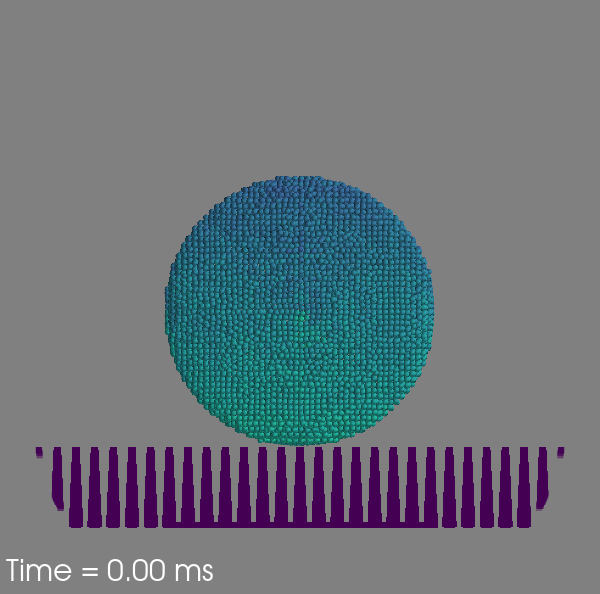}
    \includegraphics[width=0.19\linewidth]{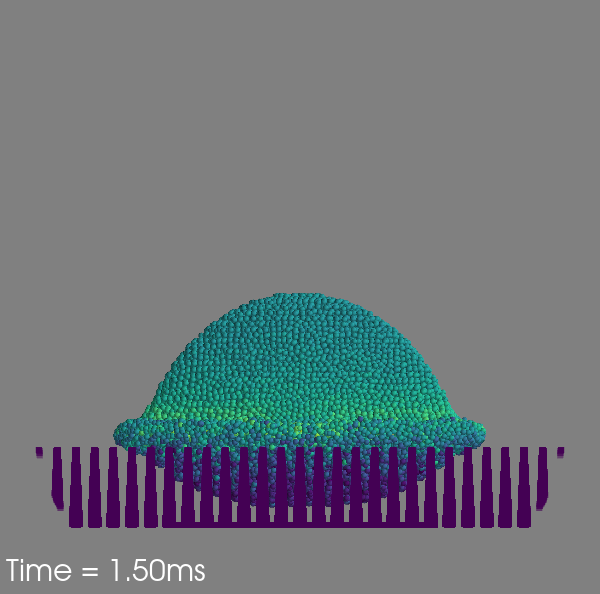}
    \includegraphics[width=0.19\linewidth]{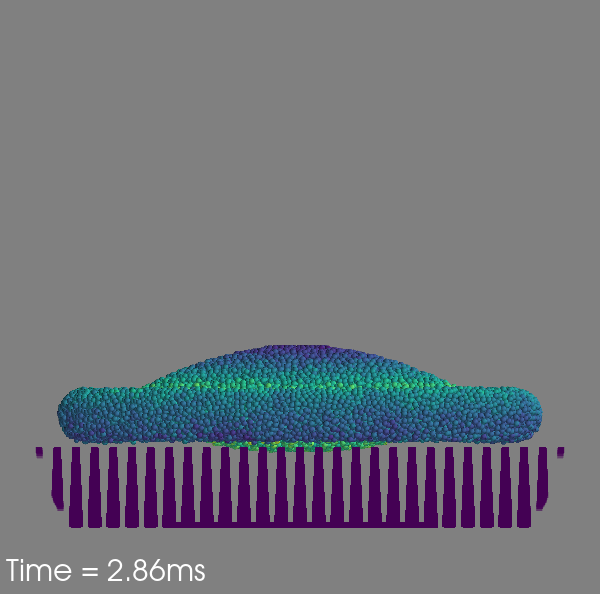}
    \includegraphics[width=0.19\linewidth]{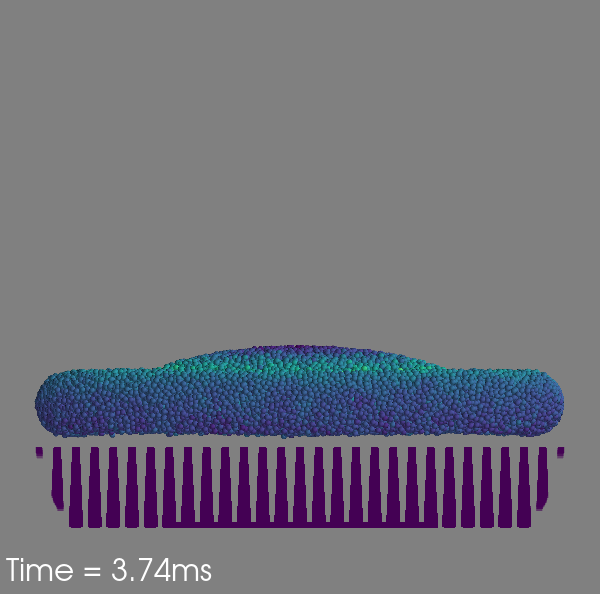}
    \includegraphics[width=0.19\linewidth]{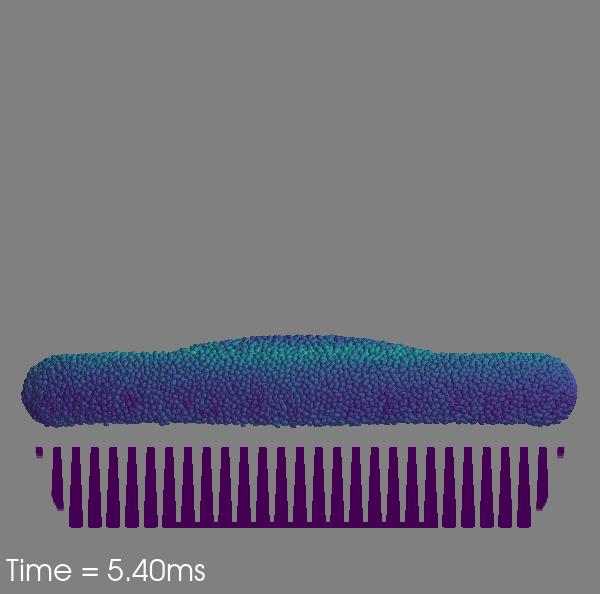}
    \caption{Comparison of simulated (bottom) and experimental (top; Copyright 2014 Springer Nature) droplet shapes during droplet bouncing at $\textit{We} = 14.1$. Time instances show initial impact at $t = 0$ ms,  complete rebound at $t = 3.4$ ms and  maximal spreading at $t = 4.6$ ms.}
    \label{fig:we14_1}
\end{figure}

\subsection{Dynamic Evolution Analysis ($\textit{We} \in [7, 24]$)}
To further validate the universality of the present method, we conducted simulations of droplet bouncing across a Weber number range of 7 to 24 and systematically compared the results with experimental data. To quantitatively assess the pancake effect, we introduce pancake quality as
\begin{equation}
Q = \frac{d_{\text{jump}}}{d_{\text{max}}},
\end{equation}
where \( d_{\text{jump}} \) is the droplet diameter at detachment time $ t_{\text{contact}}$ and \( d_{\text{max}} \) represents the global maximum diameter during spreading.  As illustrated in Figure~\ref{Fig:weber_number_comparison}, the numerical results of pancake quality   exhibit excellent agreement with experimental measurements.
Figure \ref{Fig:weber_number_comparison} also shows the temporal characteristics under different Weber numbers, including the detachment time,  maximum spreading time and take-off time along with their experimental counterparts. The comparison demonstrates satisfactory consistency between numerical predictions and physical observations.

\begin{figure}[t!]
        \centering
            \begin{tikzpicture}[scale=0.6]
            \begin{axis}[xlabel=We, ylabel=Pancake quality, legend pos=south east]
            \addplot+[
                error bars/.cd,
                y dir=both,  
                y explicit,   
            ] table[x=We, y=Q, y error=Qerr] {data/ref.csv};
            \addlegendentry{Experimental (Q)}
            \addplot[only marks, red, mark=*] table [x=We,y=Q ]{data/record.csv};
            \addlegendentry{SPH simulation (Q)}
            \end{axis}
        \end{tikzpicture}
        \begin{tikzpicture} [scale=0.6]
            \begin{axis}[xlabel=We, ylabel= Contact duration (ms)]
            \addplot+[blue, mark=triangle,
                error bars/.cd,
                y dir=both,  
                y explicit,   
            ] table[x=We, y=contact, y error=contacterr] {data/ref.csv};
            \addlegendentry{Experimental ($t_\text{contact}$)}
            \addplot[only marks, red, mark=triangle ]  table [x=We,y=t-contact]{data/record.csv};
            \addlegendentry{SPH ($t_\text{contact}$)}
            \end{axis}
        \end{tikzpicture}
\\
        \begin{tikzpicture} [scale=0.6]
            \begin{axis}[xlabel=We, ylabel=Moments of maximum spread (ms), ymin=2, ymax=7,]
            \addplot+[blue, mark=square ,
                error bars/.cd,
                y dir=both,  
                y explicit,  
            ] table[x=We, y=max, y error=maxerr] {data/ref.csv};
            \addlegendentry{Experimental ($t_\text{max}$)}
            \addplot[only marks, red, mark=square ] table [x=We,y=t-max]{data/record.csv};
            \addlegendentry{SPH ($t_\text{max}$)}
            \end{axis}
        \end{tikzpicture}
        \begin{tikzpicture} [scale=0.6]
            \begin{axis}[xlabel=We, ylabel=Take-off time (ms), ymin=2, ymax=7,]

            \addplot+[blue, mark=o,
                error bars/.cd,
                y dir=both,  
                y explicit,  
            ] table[x=We, y=up, y error=uperr] {data/ref.csv};
            \addlegendentry{Experimental ($t_{\uparrow}$)}
            \addplot[only marks,red, mark=o] table [x=We,y=t-up]{data/record.csv};
            \addlegendentry{SPH ($t_{\uparrow}$)}
            \end{axis}
        \end{tikzpicture}
        \caption{Comparison of droplet dynamics between SPH simulations and experimental measurements \cite{liu2014pancake}. Error bars represent standard deviations from experimental repetitions.}
        \label{Fig:weber_number_comparison}
\end{figure}
\subsection{Oblique bouncing on tilted substrates ($\textit{We} = 31.2$, $\theta = 30^\circ$)}
\label{subsec:oblique}
\begin{figure}[!t]
        \centering
        \includegraphics[width= \linewidth]{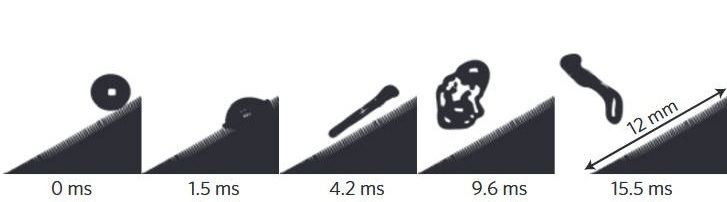} \\
        \includegraphics[width=0.19\linewidth]{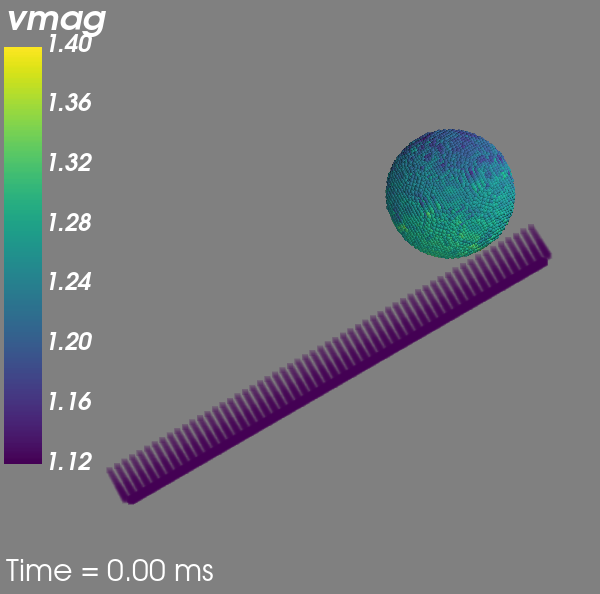}
        \includegraphics[width=0.19\linewidth]{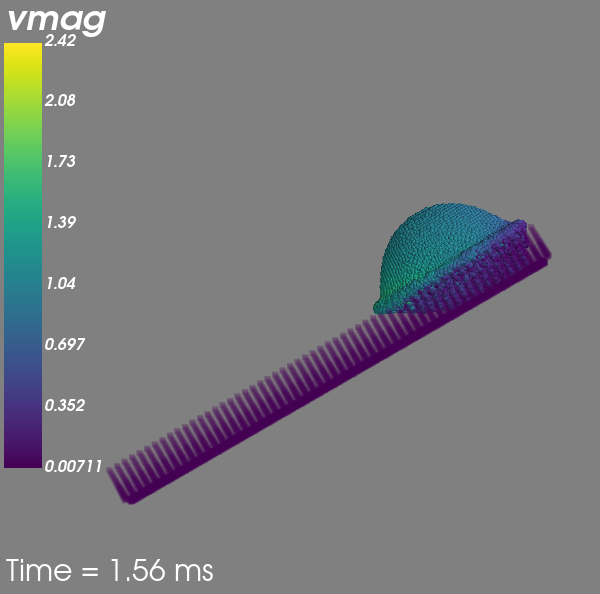}
        \includegraphics[width=0.19\linewidth]{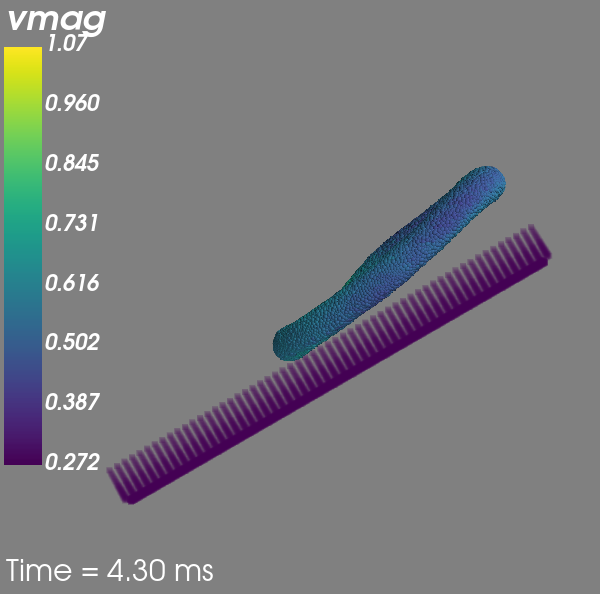}
        \includegraphics[width=0.19\linewidth]{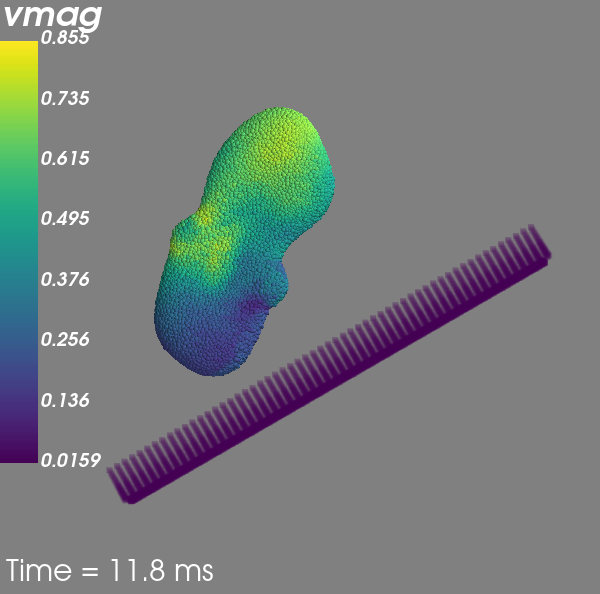}
        \includegraphics[width=0.19\linewidth]{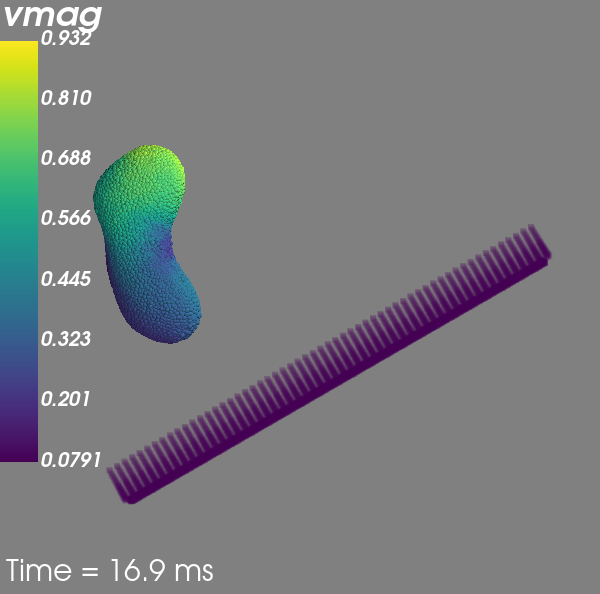}
        \caption{Comparative analysis of experimental (top; Copyright 2014 Springer Nature) and simulated (bottom) droplet morphologies at \textit{We}=31.2(velocity magnitude shown in color scale)}
    \label{fig:oblique}
\end{figure}
One practical application of superhydrophobic microcone arrays is in the development of anti-icing materials. In real-world scenarios, droplets typically impact the surface at an oblique angle rather than perpendicularly. To accurately simulate this physical phenomenon, we investigate the bouncing dynamics of water droplets on an inclined substrate. Specifically, the array plane is tilted at 30 degrees while maintaining vertical droplet impact. The simulation results are presented in Figure \ref{fig:oblique}, with experimental and numerical results shown in the top and bottom panels, respectively. Notably, the droplet detaches from the substrate in a pancake-like shape (Supplementary Movie 3), demonstrating well agreement with experimental observations.

\subsection{Secondary droplets formed by the Worthington jet ($\textit{We} = 14.1$)}
\begin{figure}[!t]
        \centering
        \includegraphics[width=0.25\linewidth,height=0.20\linewidth]{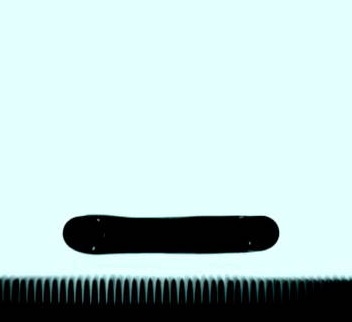}
        \includegraphics[width=0.25\linewidth,height=0.20\linewidth]{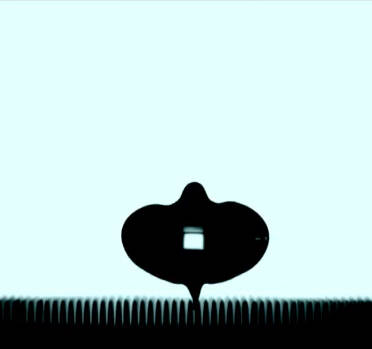}
        \includegraphics[width=0.25\linewidth,height=0.20\linewidth]{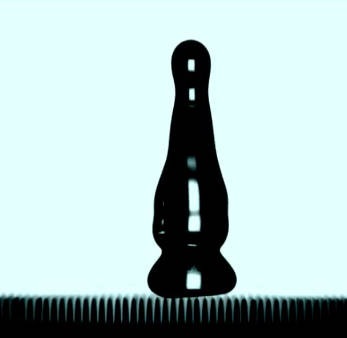}\\
        \includegraphics[width=0.25\linewidth,height=0.20\linewidth]{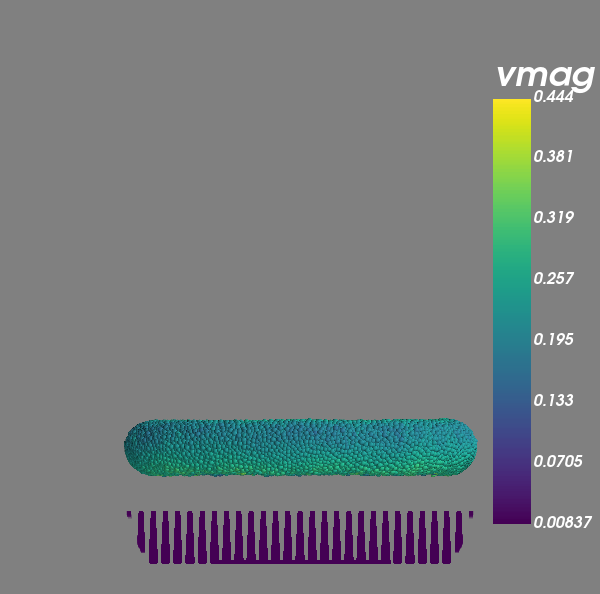}
        \includegraphics[width=0.25\linewidth,height=0.20\linewidth]{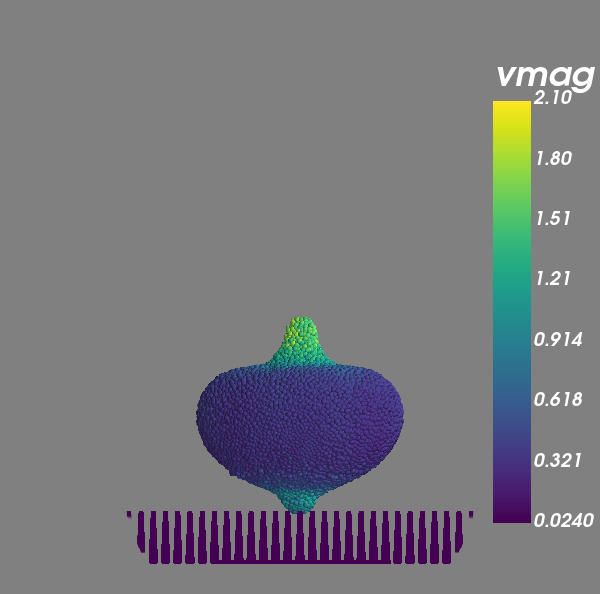}
        \includegraphics[width=0.25\linewidth,height=0.20\linewidth]{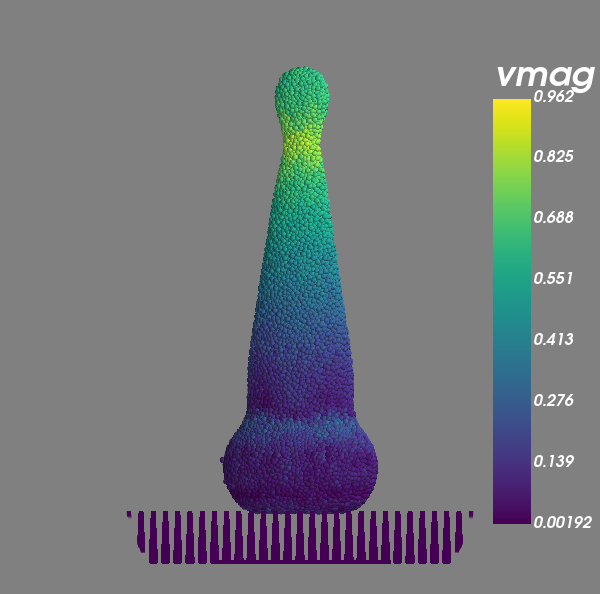} \\
        \includegraphics[width=0.25\linewidth,height=0.20\linewidth]{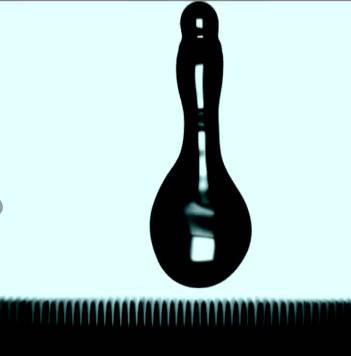}
        \includegraphics[width=0.25\linewidth,height=0.20\linewidth]{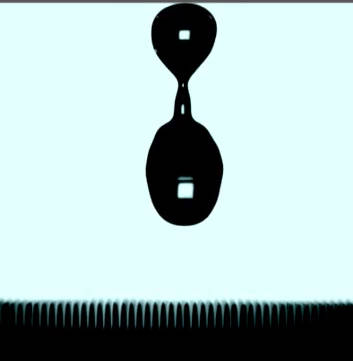}
        \includegraphics[width=0.25\linewidth,height=0.20\linewidth]{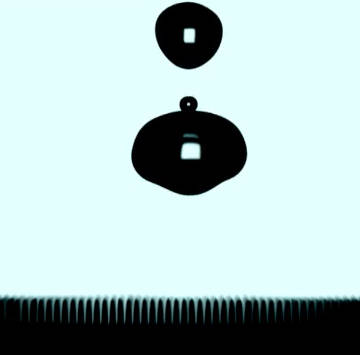}\\
        \includegraphics[width=0.25\linewidth,height=0.20\linewidth]{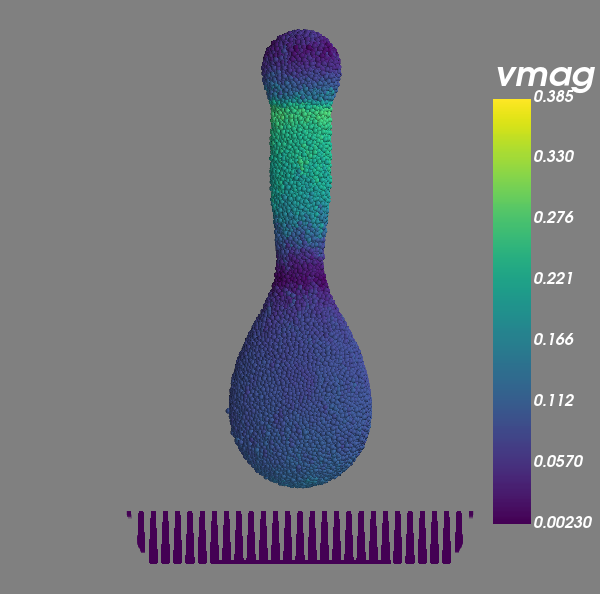}
        \includegraphics[width=0.25\linewidth,height=0.20\linewidth]{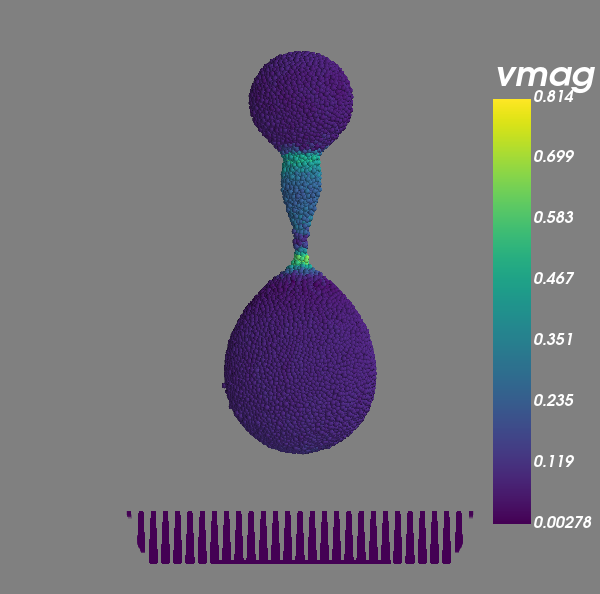}
        \includegraphics[width=0.25\linewidth,height=0.20\linewidth]{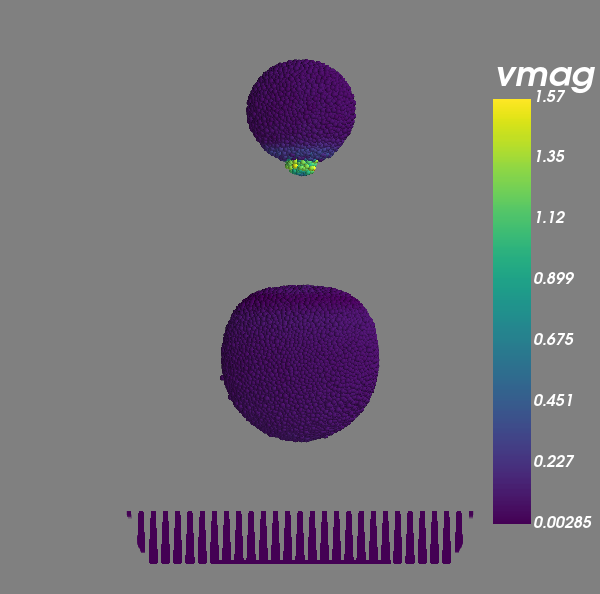}
\caption{Secondary droplet formation at \textit{We}=14.1
(top/third rows: experiment, Copyright 2014 Springer Nature; second/fourth rows: simulation, velocity magnitude in color scale)}
\label{fig:jet}
\end{figure}
Droplets with moderate Weber numbers have the potential to undergo secondary impacts subsequent to their initial interaction with the substrate. This secondary impact is often more forceful, imparting sufficient momentum to the droplet for it to rebound to a greater height and produce satellite droplets -- a phenomenon extensively documented in prior research \cite{aria2014physicochemical,zhang2022impact}. To assess the ability of our model to capture subsequent droplet dynamics, we conducted extended numerical simulations for a droplet with \textit{We}=14.1 (Supplementary Movie 2). As depicted in Figure \ref{fig:jet}, the numerical outcomes effectively replicate the entire sequence of events: secondary impact, jet formation, and eventual satellite droplet generation. These findings underscore the reliability of the proposed contact algorithm and the precision of the cohesion model integrated in this investigation.

\subsection{Sensitivity of the numerical simulation to spatial discretization} 
To assess the influence of spatial resolution on model performance, we compared the temporal evolution of droplet diameter and height from numerical simulations with experimental data under different particle spacings. It was observed that the droplet radius increases as the initial particle spacing decreases. This trend is attributed to a reduction in numerical artificial viscosity with finer spatial resolution; diminished viscous effects facilitate easier deformation of the droplet. As illustrated in Figure \ref{fig:droplet_evolution}(a), the simulated $t_{\max}$ remains approximately 5 ms with increasing spatial resolution. Furthermore, a comparison of the droplet height evolution shows that  $t_{\uparrow}$  stays around 3 ms across all tested particle spacings, as seen in Figure \ref{fig:droplet_evolution}(b). These results indicate consistent agreement with experimental observations and suggest that the model exhibits low sensitivity to the initial particle spacing.  
\begin{figure}[htbp]
\centering
\begin{tikzpicture}[scale=0.6]
    \begin{axis}[ 
        ylabel={Diameter (mm)},
        xlabel={Time (ms)},
        legend pos=  south east,  
        grid=major,
        grid style={dotted,gray!50},
        title={(a) Diameter Evolution},
        xmin=-1, xmax=8,
        ymajorgrids=true,
        xmajorgrids=true
    ]  
    
    \addplot[red, thick, mark=diamond*, mark size=1] table[x=Time, y=Diameter] {data/f07.csv};
    \addlegendentry{$\Delta x_f = 0.07$ mm}

    \addplot[blue, thick, mark=square*, mark size=1] table[x=Time, y=Diameter] {data/f06.csv};
    \addlegendentry{$\Delta x_f = 0.06$ mm}
    
    \addplot[green, thick, mark=diamond*, mark size=1] table[x=Time, y=Diameter] {data/f05.csv};
    \addlegendentry{$\Delta x_f = 0.05$ mm}
    
    \addplot[black!70!black, thick, mark=triangle*, mark size=1] table[x=Time, y=Diameter] {data/f04.csv};
    \addlegendentry{$\Delta x_f = 0.04$ mm} 
    
    \end{axis}
\end{tikzpicture} 
\begin{tikzpicture}[scale=0.6]
    \begin{axis}[ 
        ylabel={Height (mm)},
        xlabel={Time (ms)},
        legend pos= south east,  
        grid=major,
        grid style={dotted,gray!50},
        title={(b) Height Evolution},
        xmin=-1, xmax=8,
        ymajorgrids=true,
        xmajorgrids=true
    ]
    \addplot[red, thick, mark=diamond*, mark size=1] table[x=Time, y=High] {data/f07.csv};
    \addlegendentry{$\Delta x_f = 0.07$ mm}

    \addplot[blue, thick, mark=square*, mark size=1] table[x=Time, y=High] {data/f06.csv};
    \addlegendentry{$\Delta x_f = 0.06$ mm}

    \addplot[green!70!black, thick, mark=triangle*, mark size=1] table[x=Time, y=High] {data/f05.csv};
    \addlegendentry{$\Delta x_f = 0.05$ mm}
    
     \addplot[black, thick, mark=*, mark size=1] table[x=Time, y=High] {data/f04.csv};
    \addlegendentry{$\Delta x_f = 0.04$ mm} 
    
    \end{axis}
\end{tikzpicture} 
\caption{Time evolution of droplet diameter and height for $\textit{We} = 14.1$ under different mesh resolutions ($\Delta x_f$). The results demonstrate the sensitivity of the numerical simulation to spatial discretization.}
\label{fig:droplet_evolution}
\end{figure}

\section{Conclusions}\label{Sec7}
This study presents a rigorously derived nonlocal SPH framework for simulating 3D pancake bouncing dynamics on superhydrophobic microcone arrays. By formulating intermolecular forces to model surface tension through strict theoretical derivations—linking microscopic interactions to the macroscopic surface tension coefficient—we eliminate empirical parameter dependencies. Additionally, the proposed nonlocal contact repulsion mechanism resolves fundamental challenges in defining local normals on irregular microstructures, ensuring numerical stability for high-Weber-number impacts. These advances provide a mathematically grounded perspective for SPH-based modeling of complex droplet-substrate interactions, significantly expanding the method’s capability to capture microstructure-driven phenomena.
Numerical validations demonstrate excellent agreement with experimental data, underscoring the model’s accuracy in predicting droplet dynamics. The framework’s versatility opens avenues for applications requiring precise droplet control, such as anti-icing surfaces and microfluidic systems.
Future work will extend this methodology to dynamic wetting scenarios and multi-droplet interactions, further addressing open questions in interfacial flows while leveraging the nonlocal formulations developed here.


\section*{Funding}
This work is supported by the CAS AMSS-PolyU Joint Laboratory of Applied Mathematics (No. JLFS/P-501/24).
The first author was partially supported by the Hong Kong Research Grants Council RFS grant RFS2021-5S03 and GRF grant 15305624.
The third author was partially supported by the Hong Kong Polytechnic University Postdoctoral Research Fund 1-W30N.

\section*{Declaration of interests}
The authors report no conflict of interest.

\appendix
\section{Proof of Lemma \ref{quadruple_integral}}\label{appendixA}
\begin{proof}
We begin by defining the integral functional:
\[ G_f: = \int_0^{R_0} \int_{0}^{\pi} \int_{0}^{\arccos(s/R_0)} \int_{s/\cos\theta}^{R_0} f(r)\cos\theta \, r^2\sin\theta \, dr d\theta d\phi ds. \]

The integral simplifies in three steps. First, integrating over $\phi$ yields:
\begin{align}
G_f &= \pi \int_0^{R_0} \int_{0}^{\arccos(s/R_0)} \cos\theta\sin\theta \int_{s/\cos\theta}^{R_0} f(r) r^2 dr d\theta ds. \label{simp_G}
\end{align}

Next, applying the substitution $u = s/R_0$ transforms \eqref{simp_G} to:
\[ G_f = \pi R_0 \int_0^1 \int_{0}^{\arccos u} \cos\theta\sin\theta \int_{R_0 u/\cos\theta}^{R_0} f(r) r^2 dr d\theta du. \]

We then exchange the integration order of $u$ and $\theta$:
\[ G_f = \pi R_0 \int_0^{\pi/2} \cos\theta \sin\theta \int_{0}^{\cos\theta} \int_{R_0 u/\cos\theta}^{R_0} f(r) r^2 dr du d\theta. \]

A second exchange of integration order yields:
\begin{align}
G_f &= \pi R_0 \int_0^{\pi/2} \cos\theta\sin\theta \int_{0}^{R_0} f(r) r^2 \int_{0}^{r\cos\theta/R_0} du dr d\theta \notag \\
&= \pi \int_0^{\pi/2} \cos^2\theta\sin\theta d\theta \int_{0}^{R_0} f(r) r^3 dr \notag \\
&= \frac{\pi}{3} \int_{0}^{R_0} f(r) r^3 dr, \label{final_G}
\end{align}
where we have used $\int_0^{\pi/2} \cos^2\theta\sin\theta d\theta = \frac{1}{3}$.

For the special case where $f(|\boldsymbol{x}|) = \partial_{|\boldsymbol{x}|}W(\boldsymbol{x},h)$ with $W$ being a kernel function, we proceed as follows. The normalization condition requires:
\begin{align}\label{eq_1}
1 = \int W(\boldsymbol{x},R_0)d\boldsymbol{x} = 4\pi\int_{0}^{R_0} \tilde{W}(r) r^2 dr,
\end{align}
where $\tilde{W}(|\boldsymbol{x}|) = W(\boldsymbol{x},h)$.

Evaluating \eqref{final_G} through integration by parts:
\begin{align}
G_f &= \frac{\pi}{3} \int_{0}^{R_0} \partial_r\tilde{W}(r) r^3 dr \notag \\
&= \frac{\pi}{3} \left[ \left. r^3 \tilde{W}(r) \right|_{0}^{R_0} - 3\int_{0}^{R_0} \tilde{W}(r) r^2 dr \right] \notag \\
&= -\pi \int_{0}^{R_0} \tilde{W}(r) r^2 dr. \label{intermediate}
\end{align}

Combining \eqref{eq_1} with \eqref{intermediate} establishes:
\[ G_f = -\frac{1}{4}. \]
\end{proof}


\bibliographystyle{siamplain}
\bibliography{data/BIB}

 
\end{document}